\newcommand{\ignore}[1]{}
\newcommand{\ket}[1]{\left|#1\right\rangle}
\newcommand{\bra}[1]{\left\langle#1\right|}
\newcommand{\braket}[2]{\left\langle #1| #2 \right\rangle}
\newcommand{\ketbra}[2]{ \left| #1 \right\rangle\left\langle #2 \right|}
\newcommand{\prnt}[1]{\left( #1 \right)}
\newcommand{\prntt}[1]{\left[ #1 \right]}\newcommand{\prnttt}[1]{\left\{ #1 \right\}}
\newcommand{\abs}[1]{\left| #1 \right|}
\newcommand\norm[1]{\left\lVert#1\right\rVert}
\DeclareMathOperator{\tr}{tr}
\def\th{\ensuremath{^\mathrm{th}} }
\def\poly{\ensuremath{\textrm{poly}}}
\newtheorem{theorem}{Theorem}
\newtheorem{corollary}{Corollary}
\theoremstyle{definition}
\newtheorem{definition}{Definition}
\newtheorem{lemma}{Lemma}
\newtheorem{fact}{Fact}
\newtheorem{claim}{Claim}
\begin{document}
\title{How the High-energy Part of the Spectrum Affects the Adiabatic Computation Gap}
\author{Yosi Atia}
\email{g.yosiat@gmail.com}

\author{Dorit Aharonov}

\affiliation{School of Computer Science and Engineering,
The Hebrew University of Jerusalem,
The Edmond J. Safra Campus,
9190416 Jerusalem, Israel}

\begin{abstract}
Towards better understanding of how to design 
efficient adiabatic quantum algorithms, we 
study how the adiabatic gap 
depends on the spectra of the initial and final Hamiltonians in a natural family of test-bed examples. We show that  perhaps counter-intuitively, changing the energy in the initial and final Hamiltonians of only highly excited states (we do this by assigning all eigenstates above a certain cutoff the same value), can turn the adiabatic algorithm from being successful to failing. Interestingly, our system exhibits a phase transition; when the cutoff in the spectrum becomes smaller than roughly $n/2$, $n$ being the number of qubits, the behavior transitions from a successful adiabatic process to a failed one. To analyze this behavior, 
and provide an upper bound on both the minimal gap as well 
as the success of the adiabatic algorithm, we introduce the notion of \textit{escape rate} which quantifies the rate by which the system escapes the initial ground state (a related notion was also used in \cite{IL18}).  
Our results indicate a phenomenon that is interesting on its own right: an adiabatic evolution may be robust to bounded-rank perturbations, even when the latter closes the gap or makes it exponentially small.

\end{abstract}
\maketitle
\bibliographystyle{apsrev4-1}
Adiabatic quantum computation~\cite{FGGS00, AL18} is an alternative paradigm to the circuit-model quantum computation.  The system is initiated to a known, easy to prepare ground state of a Hamiltonian $H_0$, and the solution to the problem is encoded in the ground state of a Hamiltonian $H_1$. Then, the system evolves by a time-dependent Hamiltonian, which slowly varies from $H_0$ to $H_1$. By the adiabatic theorem \cite{Messiah}, if the evolution is slow enough, the system remains close to the instantaneous ground state throughout the evolution, and eventually reaches the ground state of $H_1$. 

Although the adiabatic computation model is equivalent to the quantum circuit model~\cite{ADKLLR07}, only a handful of adiabatic algorithms that do not originate from a circuit implementation have emerged \cite{RC02, SNK12, Hen14, SSO18}. An efficient adiabatic algorithm requires an efficiently implementable Hamiltonian evolution with a gap which is at least inverse polynomially small, so that the final ground state can be found. While there is a lot of design freedom in choosing the Hamiltonian evolution, proving that the gap is sufficiently large is usually very difficult (various techniques are reviewed in \cite{BFKSZ13, Baume16, AL18}), and, in general, undecidable  \cite{CPW15}. Previous work have proved the gap is small when the Hamiltonians exhibit localization phenomenons (see \cite{LMSS15}). When the degree of the interaction graph of the Hamiltonian is large, the system may fail to explore the Hilbert space fast enough to follow the ground state, which leads to a small gap \cite{DMV01,FGGN08,JKKM08,FGGGS10}. In more spatially-local models, Anderson localization \cite{Anderson58} and many body localization \cite{NH14},  may similarly close the gap \cite{AC09, AKR10, KS10, LMSS15}. 


In this note we provide some surprising insight about 
the dependence of the success of the adiabatic algorithm, on the 
full spectra of the initial and final Hamiltonian \footnote{We use the same notion of success of a quantum adiabatic algorithm as in \cite{FGGN08}, namely, the probability to reach the final ground state is non-neglectable. Note that in other works the notion of success may mean something 
different.}.
We show that this dependence is stronger than 
what might have been expected: the success probability significantly depends on the high-energy parts of the spectrum; in particular, to energies close to $n/2$. 
To study this question, we consider a family of very simple toy-example 
Hamiltonians, and demonstrate the dependence on the high-energy parts of the spectrum by a phase transition our system exhibits. 
Along the way, we develop tools to analyze the success probability of an adiabatic algorithm, by relating it to what we call the {\it escape rate}: the rate by which the system leaves the initial ground state.

To motivate our test-bed Hamiltonians, we start with 
a very simple observation. 
Hereinafter, we assume, 
as is often done in adiabatic algorithms \cite{FGGS00,RC02, ADKLLR07}, that the Hamiltonian evolutions are an interpolation of the initial Hamiltonian $H_0$ and the final Hamiltonian $H_1$ for total time $\tau$, i.e., $H(t)=H_{t/\tau}$, and $H_s=(1-s)H_{0}+s H_{1}$. Consider the following two Hamiltonian evolutions: first, the ``projection problem'' which resembles the adiabatic version of Grover's search algorithm \cite{Grover96,RC02}
\begin{equation} \label{eq:ProjectionProblem}
\begin{split}
H_{0}^{A}&=\mathbbm{1}_{2^{n}}-\ketbra{+ {\dots} +}{+{\dots} +}\\
H_{1}^{A}&=\mathbbm{1}_{2^{n}}-\ketbra{0 \dots 0}{0\dots 0},
\end{split}
\end{equation}
It is an easy calculation to prove that 
the minimal gap of $H^A_s$ is exponentially small; this is because it is an interpolation between two projection Hamiltonians, whose unique ground states have exponentially small inner-product \footnote{See proof in Supplemental Material at [URL will be inserted by publisher]}. 

Consider now also 
a second Hamiltonian evolution:
\begin{equation}
\begin{split}
H_{0}^{B}&=\sum_{k=1}^{n} \ketbra{-}{-}_k=\frac{1}{2}\sum_{k=1}^{n} \prnt{\mathbbm{1}-\sigma_x^k}
\\
H_{1}^{B}&=\sum_{k=1}^{n} \ketbra{1}{1}_k=\frac{1}{2}\sum_{k=1}^{n} \prnt{\mathbbm{1}-\sigma_z^k},
\end{split}
\end{equation}
wherein $k$ subscript terms act  on the $k\th$ qubit.
Again, a very easy calculation shows that 
the minimal gap of $H^B_s$ is a constant, as a sum of single qubit terms \footnote{See proof in Supplemental Material at [URL will be inserted by publisher]}. 

Our starting observation is  that the ground states of $H^A_0$ and $H^B_0$ (denoted $\alpha_0^A, \alpha_0^B$ respectively) are identical, and so are the ground states of $H^A_1$ and $H^B_1$; the difference between the $A$ and $B$  adiabatic evolutions (namely, $H^{A}(t)$ and $H^{B}(t)$) stems from the excited parts of the spectra in their corresponding initial and final Hamiltonians. 

This trivial example already reveals that the higher parts of the energy spectra of the initial and final Hamiltonians have sufficient influence on the gap to change it from a constant to exponentially small.


Motivated by the desire to quantify this difference and dependence on the higher parts of the spectra, we define and analyze a family of Hamiltonian evolutions, which interpolate between these two evolutions, and which are controlled by a single parameter $\theta$:
\begin{equation}
\label{eq:ThetaH}
\begin{split}
H_0^{\theta}&=\sum_{j=0}^{2^{n}-1} h_\theta(j)|j_{+}\rangle\langle j_{+}| ~~~~
H_1^{\theta}=\sum_{j=0}^{2^{n}-1}{h}_\theta(j)|j\rangle\langle j|,
\end{split}
\end{equation}
wherein $h_\theta(j)=\min(\theta,h(j))$, and $h(j)$ is the number of 1s of the binary representation of $j$ (i.e., the Hamming weight). $\ket{j_+}$ is the state $\ket{j}$ after a Hadamard gate ($\ket 0 \rightarrow \ket + , \ket 1\rightarrow \ket -$) is applied to every qubit. Indeed, the cases $H^{1}$ and $H^{n}$ correspond to $H^A$ and $H^B$ respectively. Thus, varying $\theta$ interpolates between the two extreme cases. By gradually changing the value of $\theta$ we can control the energy landscape, and thus, the gap.

We note that while the 
intermediate Hamiltonians 
$H_s^{\theta}$ may be highly 
non local, they are still symmetric to permutations of the $n$ qubits; such Hamiltonians were also studied in \cite{DMV01, FGG02,Reichardt04, BvD16a, BvD16b, MAL16,KC17} (see also \cite{AL18}). 

To  understand how 
the modification of $\theta$ affects the minimal gap along the adiabatic path, note that the initial and final Hamiltonians for $\theta=1$ have a ``flatter'' energy landscape compared with the corresponding initial and final Hamiltonians of $\theta=n$ (see Fig. \ref{fig:landscape}). We can justify the minimal gap difference 
between these two extreme cases using intuition coming from quantum simulated annealing: with a flat energy landscape, the system requires more time to find the direction to the energy minimum, because in the local vicinity there is no ``gradient'' of energy.

\begin{figure} 
\centering
	{\includegraphics[scale=0.19, trim=0.5cm 0cm 1cm 0cm,clip=true] {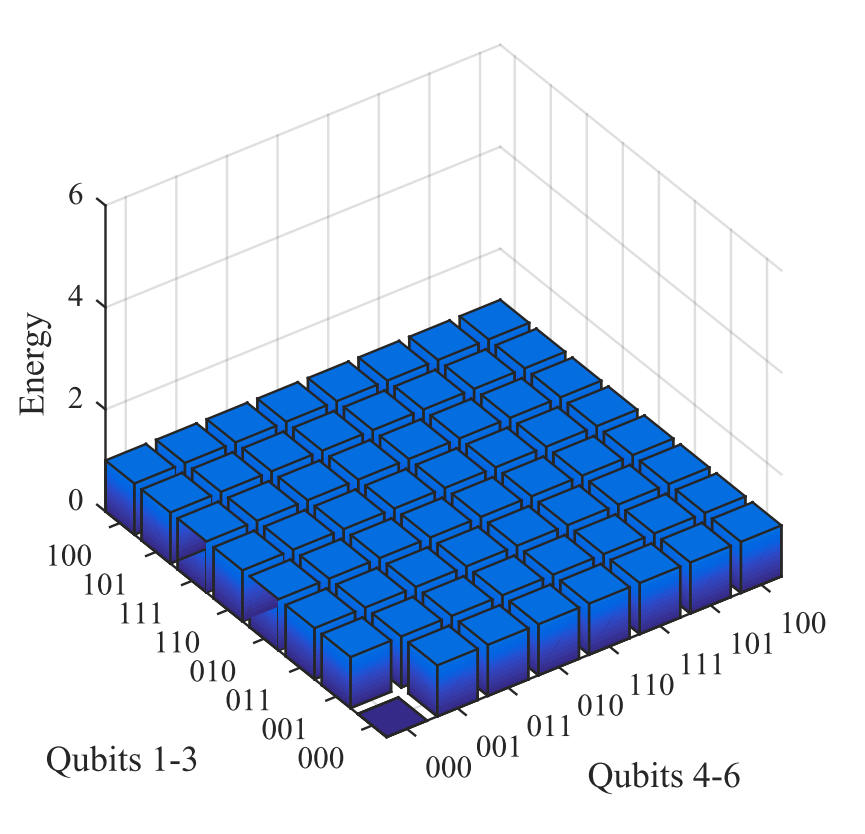}} {\includegraphics[scale=0.18, trim=1.3cm 0cm 1.3cm 0cm,clip=true] {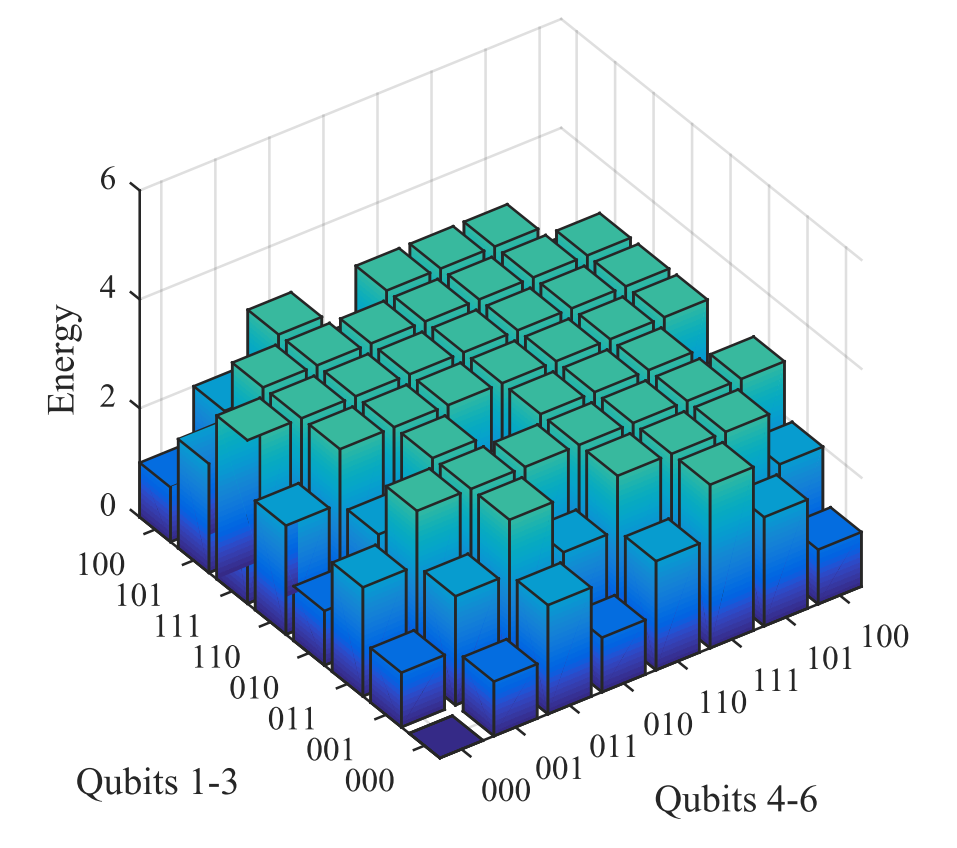}}
	\includegraphics[scale=0.25]{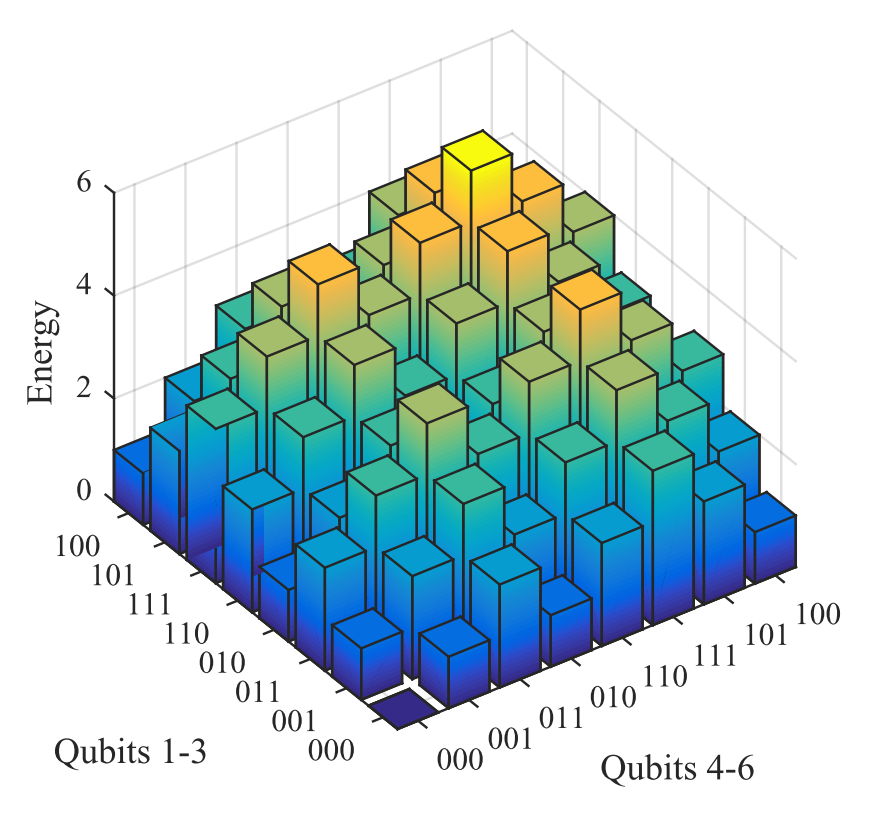}
\caption{\label{fig:landscape} Energy landscape for a 6-qubit $H_1^{\theta}$ with $\theta=1,3,6$. The $xy$-plane spans the computational basis for 6 qubits and the height ($z$ axis) indicates the energy of the state. As $\theta$ decreases, so does the ``energy gradient'', which leads the system to the minimal energy.}
\end{figure}

The question is how flat does the landscape have to be, for the evolution to require a long time, i.e., for the gap to approximately vanish? Intuitively, one would predict that changes in the very high energy levels of the Hamiltonians should not greatly affect the system's evolution, because only low energy states 
affect the adiabatic gap; 
in other words, one might 
naively expect that 
$\theta$ can be fairly small while still maintaining the qualitative behavior of 
$\theta=n$.

{~}

\subsection{Main Result}
It turns out that the  truth is very different than 
perhaps what one might expect at very first sight - a phase transition occurs as perhaps expected, but the critical point only happens at the high value of $\theta \approx n/2$. 

The following theorem proves a phase transition 
    in the success of the adiabatic evolution, as a function of $\theta$ \footnote{See full proof in Supplemental Material at [URL will be inserted by publisher]} and connects it in one case to the minimal gap (asymptotic notations explained in footnote \footnote{The $\Omega,O$ and $o$ notations are often utilized to compare asymptotic functions or series. $f(n)\in O(g(n))$, which is equivalent to $g(n)\in \Omega (f(n))$, means that asymptotically, $f(n)/g(n)$ is at most a positive constant.  $f(n)=o(g(n))$ if $f(n)/g(n)$ converges to 0.}):

\begin{theorem} [\textbf{Main: Phase transition}] \label{thm:main}~ \\
\textbf{a.} Let $\theta_\ell\triangleq \frac{n}{2}-\sqrt{n\log^c n }$ for some constant $c>1$. If $\theta\le \theta_\ell$, then polynomial time adiabatic evolution by $H^{\theta}$ fails.  Furthermore, the minimal gap in this case is $o(1/\poly(n))$ small.\\
\textbf{b.} Let $\theta_h \triangleq n/2+\sqrt{40\log n}$. Evolving by  $H^{\theta}$, wherein $\theta \ge \theta_h$  succeeds  for  $\tau=n^4$.
\end{theorem}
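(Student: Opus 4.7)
Since $H_s^\theta$ is permutation-symmetric, the entire dynamics is confined to the $(n{+}1)$-dimensional totally symmetric subspace, which contains both $|\alpha_0\rangle:=|+\rangle^{\otimes n}$ and the target $|0\rangle^{\otimes n}$. The identity $H_0^\theta|\alpha_0\rangle=0$ yields $H(s)|\alpha_0\rangle = s H_1^\theta|\alpha_0\rangle$. Writing $H_1^\theta = \theta I - \Delta_\theta$ with $\Delta_\theta := \sum_{j}(\theta-h(j))_+\,\ketbra{j}{j}$ (supported on the Hamming ball of radius $\theta$), the quantity $s\,\|\Delta_\theta|\alpha_0\rangle\|$ plays the role of the escape rate of the initial ground state; both parts of the theorem hinge on its size.

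\textbf{Part (a).} For $\theta\le\theta_\ell$, a Chernoff bound on $\mathrm{Bin}(n,1/2)$ gives
\[
\|\Delta_\theta|\alpha_0\rangle\|^2 \;=\; \frac{1}{2^n}\sum_{h(j)<\theta}(\theta-h(j))^2 \;\le\; \theta^2\, e^{-2\log^c n} \;=\; o(1/\poly(n))
\]
for $c>1$. Decompose $|\psi(t)\rangle = a(t)|\alpha_0\rangle + |\psi^\perp(t)\rangle$ with $|\psi^\perp\rangle\perp|\alpha_0\rangle$. Projecting Schr\"odinger's equation onto $|\alpha_0\rangle^\perp$ gives $i\partial_t|\psi^\perp\rangle = (I-\ketbra{\alpha_0}{\alpha_0})H(s)|\psi^\perp\rangle - s\, a(t)\,(I-\ketbra{\alpha_0}{\alpha_0})\Delta_\theta|\alpha_0\rangle$. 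The first term is Hermitian on $|\alpha_0\rangle^\perp$ and thus preserves $\||\psi^\perp\rangle\|$; the second is a forcing term of norm at most $\|\Delta_\theta|\alpha_0\rangle\|$. A Gr\"onwall/Duhamel estimate then yields $\||\psi^\perp(\tau)\rangle\| \le \tau\,\max_s\|\Delta_\theta|\alpha_0\rangle\| = o(1)$ for any $\tau=\poly(n)$, so the success probability $|\langle 0^n|\psi(\tau)\rangle|^2 \le (2^{-n/2}+o(1))^2 = o(1)$, proving failure. The gap bound $\min_s\Delta(s)=o(1/\poly(n))$ then follows by contradiction with the standard adiabatic theorem, which would otherwise force success for some polynomial $\tau$.

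\textbf{Part (b).} For $\theta\ge\theta_h$ the escape rate inflates to $\Theta(\sqrt n)$ (the variance of $\mathrm{Bin}(n,1/2)$ is $n/4$), so the dynamics is no longer frozen. The strategy is to establish a polynomial lower bound $\min_s\Delta(s)=\Omega(1/\poly(n))$ on the gap and then invoke the adiabatic theorem with $\tau=n^4$, using $\|H(s)\|,\|\partial_sH(s)\|=O(n)$. For the gap I would compare $H_s^\theta$ with the $\theta=n$ Hamiltonian $H_s^n$, whose minimum gap is $1/\sqrt 2$ since it is a sum of commuting single-qubit terms with single-qubit gap $\sqrt{1-2s(1-s)}$. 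The difference $H_s^n-H_s^\theta$ is supported on states with Hamming weight $>\theta$ in either the computational or the Hadamard basis, so a Davis-Kahan/Kato-style perturbation argument -- restricted to the low-weight sector where the $H_s^n$ ground and first-excited states are concentrated -- should preserve the gap up to polynomial factors.

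\textbf{Main obstacle.} The crux is the gap lower bound in (b): at small $s$ the ground state of $H_s^\theta$ is close to $|+\rangle^{\otimes n}$, whose Hamming-weight distribution peaks at $n/2$ with width $\Theta(\sqrt n)$ and so overlaps significantly with the capped region $h(j)>\theta_h$. One must show that this overlap does not close the gap, e.g., by quantifying how quickly the instantaneous ground-state Hamming distribution shifts below $\theta$ as $s$ grows (exploiting the product structure $|g_s\rangle^{\otimes n}$ of the $H_s^n$ ground state, whose Hamming weight is $\mathrm{Bin}(n,|\beta_s|^2)$ with $|\beta_s|^2$ monotonically decreasing from $1/2$), and by establishing a matching concentration in the Hadamard basis near $s=1$. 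The specific constant $40$ in $\theta_h$ is presumably calibrated so that this two-sided concentration argument just barely closes.
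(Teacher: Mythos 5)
Your part (a) is essentially correct and takes the same escape-rate route as the paper, with a cleaner dynamical packaging: the paper splits the propagator via Trotter and bounds the angle $\varphi(t)$ step by step (Lemma~\ref{lem:EscapeBoundsSuccess}), whereas you project the Schr\"odinger equation onto $\ket{\alpha_0}^\perp$ and apply Duhamel; both yield $\norm{\psi^\perp(\tau)}\le\tau\cdot\beta$. Two small points: your $\norm{\Delta_\theta\ket{\alpha_0}}$ is an \emph{upper} bound on the escape rate (the paper subtracts the mean to get $\mathrm{std}_{\alpha_0}(h_\theta)$, cf.\ Eq.~\ref{eq:DeltaE}), which is fine for the one-sided bound; and the success probability should be said to be $o(1/\poly(n))$, not merely $o(1)$, to match the paper's definition of failure -- this does hold here because $\tau\norm{\Delta_\theta\ket{\alpha_0}}$ is superpolynomially small, not merely $o(1)$. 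Your gap conclusion by contradiction with the adiabatic theorem is the ``weak argument'' the paper remarks on; Lemma~\ref{lem:stuck_qw_aqc} gives a quantitative improvement, but the weak version suffices for the stated $o(1/\poly(n))$.

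Part (b) is where you have a genuine gap, and you have correctly flagged it as the crux -- but the plan you sketch (lower-bound the gap of $H^\theta_s$ via a Davis--Kahan/Kato perturbation from $H^n_s$, then invoke the adiabatic theorem on $H^\theta_s$) is not what the paper does, and the paper explicitly states it does \emph{not} know how to prove the $H^\theta_s$ gap is inverse-polynomial. The missing idea is to sidestep the $H^\theta_s$ gap entirely: apply the adiabatic theorem only to $H^n_s$ (constant gap $1/\sqrt2$), which guarantees the $H^n$ path stays $o(1)$-close to the product state $\alpha^n_s$ for $\tau=n^4$; then use the \emph{dynamical} perturbation bound of Lemma~\ref{lem:pert}, $\norm{(U^n(\tau)-U^{\theta_h}(\tau))\ket{\alpha^n_0}}\le\int_0^\tau \norm{\delta H_s\ket{\psi_s}}\,dt$, rather than any spectral perturbation of eigenspaces. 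Since $\delta H_s=H^n_s-H^{\theta_h}_s$ is supported on the high-Hamming-weight states $\ket{j},\ket{j_+}$ with $h(j)>\theta_h$, and since $\alpha^n_s$ is a product state whose Hamming weight in either basis is $\mathrm{Bin}(n,p_s)$ with $p_s\le1/2$, the integrand is controlled by the Chernoff tail (Eq.~\ref{eq:projectionWChernoff}) times $\norm{\delta H_s}\le 2n$, giving a total path shift $O(n^{-4})$. This proves success of the $H^{\theta_h}_s$ evolution \emph{without} bounding its gap -- which is precisely what makes Corollary~\ref{cor:robustness} surprising, and why trying to prove a gap lower bound first (your route) is strictly harder and possibly unnecessary. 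Your worry about the overlap of $\ket{+}^{\otimes n}$ with the capped region near $s=0$ is resolved by that Chernoff bound; note the Supplemental Material uses $\theta_h=n/2+\sqrt{40n\log n}$ (not $n/2+\sqrt{40\log n}$), which is what makes the tail polynomially small -- the theorem statement appears to contain a typo.
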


Above we say that the evolution is \emph{successful}, if the final state of the system is with overlap  $\Omega(1/\poly(n))$ with the final Hamiltonian's ground state \footnote{An adiabatic evolution with non-negligible gap, which is run slowly enough, is necessarily successful but the other way around is not necessarily true; cf. \cite{SNK12}}.  
We say the evolution fails if 
the final state is with overlap $o(1/\poly(n))$ with the final ground state. 

To prove the theorem, we relate three properties: the success 
probability of the algorithm, the adiabatic gap and a third notion, 
which we call \emph{escape rate}. 
The escape rate from a subspace $\mathcal V$ by a Hamiltonian $H$ is the rate by which a system at $\mathcal V$ leaves $\mathcal V$ when evolving by $H$ for infinitesimal time.
We prove  Theorem \ref{thm:main}a by showing that for $\theta\le\theta_\ell$,  the escape rate from the ground state of $H_0$ by the Hamiltonian $H_1$ is super-polynomially small,  from which we can deduce the failure the evolution.  By the adiabatic theorem, a failure to reach the final ground state implies that the gap is super-polynomially small (otherwise the final ground state would have been found in polynomial time), and this proves the second part
of Theorem \ref{thm:main}a. 

We note that proving that the gap is small directly 
would not suffice to prove failure of the algorithm;
see e.g. \cite{SNK12}. 

Theorem \ref{thm:main}b shows the other side of the phase transition, namely, that if $\theta\ge 
\theta_h$ the adiabatic algorithm succeeds. We note that we do not know how to prove that the gap in this case is $\Omega(1/\poly(n))$ for $\theta\ge \theta_h$, which would imply the result, though we believe this is true. 
Instead, the result is proven by a study of the robustness 
of the system to a certain type of perturbations. 
More precisely, we show that the path of the system as it evolves by $H^{n}_s$, remains in a subspace which has very little overlap with the subspace spanned by large Hamming-weight states ($\ket{j},\ket{j_+}$, with $h(j)\ge\theta_h$).  Therefore, perturbing their energies has a negligible effect on the path. 
We later extend this tool to a more general statement 
about robustness of a system to 
certain types of perturbations (See Theorem 
\ref{thm:robustnesss}). 

\begin{figure} [h]
\centering
{\includegraphics[scale=0.8, trim=6.7cm 10.2cm 6.7cm 10.3cm,clip=true] {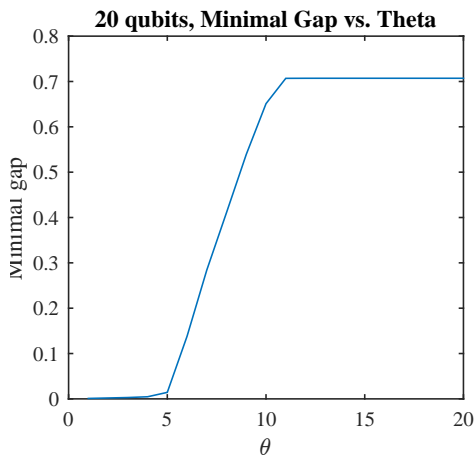}}
\caption{\label{fig:threshold} The minimal spectral gap for 20 qubit $\theta$-Hamiltonian.}
\end{figure}
\subsection{Tools}
We introduce the notion of {\it escape rate} from a subspace. 
\begin{definition} [\textbf{Escape rate}] 
The escape rate from a subspace $\mathcal V$ by Hamiltonian $H, \norm{H}=\poly(n)$, is $\beta$ if 
\begin{equation} \label{eq:NotMoving} 
\begin{split}
\max_{v\in \mathcal V} \norm{\Pi_{\mathcal{V^\perp}} H \ket{v}} = \beta.
\end{split}
\end{equation}
\end{definition}

The escape rate is in some sense a measure of localization: the rate by which a system in $\mathcal V$ can escape $\mathcal V$ in infinitesimal time.  However, note that the escape rate is not necessarily small for Anderson localized systems \cite{Anderson58}, in which the distance a particle can travel on a lattice is limited to a constant, due to disorder. For example, consider a subset of subsequent sites in an Anderson localized particle on a line; this subset of sites defines a subspace $\mathcal V$; in principle,  a particle at the border of the subspace can escape quickly (i.e., large escape rate), though the distance traveled on the lattice is still bounded by some constant. 
Thus, upper bounding the escape rate seems to provide a stronger handle on the behavior of the particle.

We remark that, as we will show later in Eq. \ref{eq:DeltaE}, in case the subspace  $\mathcal V$ is one dimensional, the escape rate from $\mathcal V$ turns out to be exactly equal to the energy uncertainty of the state spanning $\mathcal V$, a quantity which is used in the quantum-speed limit of Mandelstam-Tamm \cite{MT45,Bhattacharyya83}; the connection between the quantum speed limit and  adiabatic evolution was recently explored in \cite{IL18} (see also \cite{AA04} \footnote{Note that the escape rate corresponds to the rate the subspace is abandoned only at the very beginning of the evolution. After that, parts of the system that are outside the space could start flowing back into it, reducing the effective abandoning rate. This is why it is only an upper bound.}).


The following Lemma \ref{lem:EscapeBoundsSuccess} is a tool to rule-out the success of polynomial time adiabatic algorithms, using the notion of escape rate \footnote{See proof in Supplemental Material at [URL will be inserted by publisher]}. We use the $\alpha_s$ to denote the ground state of the Hamiltonian $H_s$.
\begin{lemma} [\textbf{Escape rate bounds success}] \label{lem:EscapeBoundsSuccess} 
Let $H_s$ be an $n$ qubit Hamiltonian evolution by interpolation s.t. $\norm{H_s}=O(\poly(n))$.
 Additionally, let ${\mathcal{V}}$ be a subspace spanned by eigenstates  of $H_0$ whose projection $\Pi_{\mathcal{V}}$ satisfies $\norm{\Pi_{\mathcal{V}}\ket{\alpha_0}}=1$, and $\norm{\Pi_{\mathcal{V}}\ket{\alpha_1}}=o(1/\poly(n))$.  When adiabating $\alpha_0$ from $H_0$ to $H_1$ in time $\tau=O(\poly(n))$, while the escape rate of $\mathcal V$ by $H_1$ is $\beta<\pi/2\tau$, the projection of the final state on $\alpha_1$ is $\le  \sin(\beta\tau)+o(1/\poly(n))$.
\end{lemma}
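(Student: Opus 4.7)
The plan is to bound how much of $\ket{\psi(t)}$ leaves $\mathcal{V}$ during the adiabatic run, and then combine that bound with the hypothesis that $\ket{\alpha_1}$ itself is almost orthogonal to $\mathcal{V}$. Writing $\ket{\psi(t)}$ for the true time-evolved state starting at $\ket{\psi(0)}=\ket{\alpha_0}\in\mathcal{V}$, I set $\ket{\psi_\parallel(t)}=\Pi_{\mathcal{V}}\ket{\psi(t)}$, $\ket{\psi_\perp(t)}=\Pi_{\mathcal{V}^\perp}\ket{\psi(t)}$, and $b(t)=\norm{\ket{\psi_\perp(t)}}$. The structural observation I would establish first is that, since $\mathcal{V}$ is spanned by eigenstates of $H_0$, the off-diagonal block $\Pi_{\mathcal{V}^\perp}H_0\Pi_{\mathcal{V}}$ vanishes. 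Hence for the interpolated Hamiltonian $H(t)=(1-t/\tau)H_0+(t/\tau)H_1$ only the $H_1$ piece moves amplitude out of $\mathcal{V}$, and $\Pi_{\mathcal{V}^\perp}H(t)\Pi_{\mathcal{V}}=(t/\tau)\Pi_{\mathcal{V}^\perp}H_1\Pi_{\mathcal{V}}$ has operator norm at most $(t/\tau)\beta\le\beta$ by the definition of the escape rate.

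Next I would derive a differential inequality for $b(t)$. From $i\partial_t\ket{\psi}=H(t)\ket{\psi}$ and $\tfrac{d}{dt}\bra{\psi}\Pi_{\mathcal{V}^\perp}\ket{\psi}=i\bra{\psi}[H(t),\Pi_{\mathcal{V}^\perp}]\ket{\psi}$ a short expansion in the $\mathcal{V}\oplus\mathcal{V}^\perp$ blocks gives
\begin{equation*}
\tfrac{d}{dt}b(t)^2 = -2\operatorname{Im}\bra{\psi_\parallel(t)}H(t)\ket{\psi_\perp(t)}.
\end{equation*}
Applying Cauchy--Schwarz and using Hermiticity to equate $\norm{\Pi_{\mathcal{V}}H(t)\Pi_{\mathcal{V}^\perp}}_{\mathrm{op}}=\norm{\Pi_{\mathcal{V}^\perp}H(t)\Pi_{\mathcal{V}}}_{\mathrm{op}}\le\beta$ yields $\abs{\tfrac{d}{dt}b^2}\le 2\beta\, b(t)\norm{\ket{\psi_\parallel(t)}}=2\beta\, b(t)\sqrt{1-b(t)^2}$, and therefore $\abs{\dot b(t)}\le\beta\sqrt{1-b(t)^2}$. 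Rewriting this as $\abs{\tfrac{d}{dt}\arcsin b(t)}\le\beta$ and integrating from $0$ to $\tau$ with the initial condition $b(0)=0$ gives $\arcsin b(\tau)\le\beta\tau$; under the hypothesis $\beta\tau<\pi/2$ this yields $b(\tau)\le\sin(\beta\tau)$.

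To finish, I would decompose $\braket{\alpha_1}{\psi(\tau)}=\bra{\alpha_1}\Pi_{\mathcal{V}}\ket{\psi(\tau)}+\bra{\alpha_1}\Pi_{\mathcal{V}^\perp}\ket{\psi(\tau)}$ and bound the two pieces separately: the first by $\norm{\Pi_{\mathcal{V}}\ket{\alpha_1}}=o(1/\poly(n))$ via Cauchy--Schwarz, and the second by $\norm{\Pi_{\mathcal{V}^\perp}\ket{\psi(\tau)}}=b(\tau)\le\sin(\beta\tau)$, delivering the claimed bound. The main delicate point, already noted in the paper's own footnote, is that the escape rate only controls the \emph{instantaneous} outward flow from $\mathcal{V}$---amplitude already in $\mathcal{V}^\perp$ may subsequently flow back in. The derivation is unaffected because Cauchy--Schwarz bounds $\abs{db^2/dt}$ uniformly in both directions, and the factor $\sqrt{1-b^2}$ arising from $\norm{\ket{\psi_\parallel}}$ is precisely what upgrades a naive linear estimate $b(\tau)\le\beta\tau$ to the tighter $\sin(\beta\tau)$ appearing in the statement.
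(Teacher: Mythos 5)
Your proof is correct and uses the same underlying mechanism as the paper's: both exploit the fact that $\mathcal{V}$ is an invariant subspace of $H_0$ so only the $H_1$ piece of the interpolation drives amplitude out of $\mathcal{V}$, both bound the relevant off-diagonal block by the escape rate $\beta$ (invoking Hermiticity to control the $\Pi_{\mathcal{V}}H\Pi_{\mathcal{V}^\perp}$ direction as well), and both integrate the resulting angular-velocity bound $\abs{\dot\varphi}\le\beta$ to obtain $\norm{\Pi_{\mathcal{V}^\perp}\ket{\psi(\tau)}}\le\sin(\beta\tau)$ before the final triangle-inequality split against $\ket{\alpha_1}$. The only difference is presentation: the paper runs a Trotterized discrete recursion $\cos\varphi(t+dt)\ge\cos\bigl(\varphi(t)+\beta\,dt\bigr)$, whereas you derive and integrate the continuous differential inequality for $b(t)=\norm{\Pi_{\mathcal{V}^\perp}\ket{\psi(t)}}$, which is equivalent (modulo the standard limiting argument at $b=0$, where $b$ fails to be differentiable) and arguably cleaner.
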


The idea of the proof is that adiabating  $\alpha_0$, from $H_0$ to $H_1$ can be partitioned to small time-independent Hamiltonian evolutions, alternating between  $H_0$ and $H_1$ by using the Trotter formula \cite{NC00}. Clearly $\alpha_0$ is invariant to $H_0$; hence, the rate by which the amplitude of $\alpha_0$ changes with the evolution can be bounded using its escape rate by $H_1$. If the escape rate is not high enough, the system  remains mostly at $\mathcal V$ and has  negelectable amplitude on $\alpha_1$.

By the adiabatic theorem, if indeed the  system fails to reach the $\alpha_1$ in polynomial time, the minimal gap must have been $o(1/\poly(n))$ small. This argument however gives weak bounds; the following Lemma uses a more subtle argument to provide the upper bound on the gap of Theorem \ref{thm:main}a.  
\begin{lemma} [\textbf{Gap bound by escape rate}] 
\label{lem:stuck_qw_aqc}
Let $H_s$ be an $n$ qubit Hamiltonian evolution s.t. $\norm{H_s}=\poly(n)$. Additionally, let ${\mathcal{V}}$ be a subspace spanned by eigenstates of $H_0$, whose projection $\Pi_{\mathcal{V}}$ satisfies $\norm{\Pi_{\mathcal{V}} \ket{\alpha_0}}=1$  and $\norm{\Pi_{\mathcal{V}}\ket{\alpha_1}}\le 1/10$.
If the escape rate of the subspace $\mathcal V$ by $H_1$ is $\beta$, then 
 $\Delta \le \sqrt[4]{{100\beta}\norm{H_0-H_1}^3}$.
\end{lemma}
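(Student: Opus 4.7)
The plan is to combine an escape-rate upper bound on the success probability (extending Lemma~\ref{lem:EscapeBoundsSuccess}) with the standard adiabatic theorem's lower bound, and to optimize the runtime $\tau$ so the two clash; since larger $\tau$ helps the adiabatic theorem but allows more escape from $\mathcal{V}$, the balance point between these competing effects forces the claimed upper bound on the minimal gap $\Delta$.

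First I would extend Lemma~\ref{lem:EscapeBoundsSuccess} to the present weaker hypothesis $\norm{\Pi_\mathcal{V}\ket{\alpha_1}}\le 1/10$. The Trotter-style argument sketched for Lemma~\ref{lem:EscapeBoundsSuccess} in fact establishes the cleaner intermediate estimate $\norm{\Pi_{\mathcal{V}^\perp} U(\tau)\ket{\alpha_0}} \le \sin(\beta\tau)$ for $\beta\tau<\pi/2$, with no assumption on $\alpha_1$. Splitting $\langle\alpha_1|U(\tau)|\alpha_0\rangle = \langle\alpha_1|\Pi_\mathcal{V} U(\tau)|\alpha_0\rangle + \langle\alpha_1|\Pi_{\mathcal{V}^\perp}U(\tau)|\alpha_0\rangle$ and applying Cauchy--Schwarz to each piece then yields
\[
|\langle\alpha_1|U(\tau)|\alpha_0\rangle| \;\le\; \tfrac{1}{10} + \sin(\beta\tau).
\]

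Next I would invoke a quantitative adiabatic theorem (e.g.\ Jansen--Ruskai--Seiler) applied to $H_s=(1-s)H_0+sH_1$, for which $H'(s)=H_1-H_0$ and $H''(s)=0$; the dominant contribution gives a lower bound $|\langle\alpha_1|U(\tau)|\alpha_0\rangle|\ge 1 - C\norm{H_0-H_1}^2/(\tau\Delta^3)$ once $\Delta$ lower-bounds the gap along the path. Combining with the previous step and using $\sin x\le x$, the following must hold for every admissible $\tau$:
\[
\tfrac{9}{10} \;\le\; \beta\tau \;+\; \frac{C\,\norm{H_0-H_1}^2}{\tau\,\Delta^3}.
\]
The right-hand side is minimized at $\tau^\ast = \norm{H_0-H_1}\sqrt{C/(\beta\Delta^3)}$, with minimum value $2\sqrt{C\beta\norm{H_0-H_1}^2/\Delta^3}$; requiring this to be at least $9/10$ yields $\Delta^3 \le (400C/81)\,\beta\,\norm{H_0-H_1}^2$.

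Finally, multiplying by the elementary bound $\Delta \le 2\norm{H_0-H_1}$ (valid after a trivial shift, since the gap is shift-invariant while the spectral range of the convex combination $H_s$ is controlled by $\norm{H_0-H_1}$) upgrades the cubic bound to $\Delta^4 \le (800C/81)\,\beta\,\norm{H_0-H_1}^3$, and tracking constants gives the stated factor $100$. The main obstacle I expect is arithmetic bookkeeping rather than anything conceptual --- keeping the constants from the adiabatic theorem, from the $(10/9)^2$ factor, and from the final trivial bound aligned so that they multiply out to exactly $100$, while also ensuring the adiabatic error estimate stays meaningful (i.e.\ below $1$) at the optimal $\tau^\ast$, which in turn restricts the range of $\Delta$ to which the argument applies nontrivially.
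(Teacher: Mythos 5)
Your overall strategy matches the paper's: upper-bound the final overlap with $\alpha_1$ via the escape rate (a mild extension of Lemma~\ref{lem:EscapeBoundsSuccess}), lower-bound it via a quantitative adiabatic theorem, and derive a constraint on $\Delta$. The paper reaches the quartic form in a single step by using an Ambainis--Regev-style adiabatic bound whose error scales as $\norm{H'}^3/(\tau\Delta^4)$ and simply fixing $\tau=1/\beta$, whereas you use the $\norm{H'}^2/(\tau\Delta^3)$ (Jansen--Ruskai--Seiler) form, optimize over $\tau$, obtain a cubic bound, and then need the auxiliary inequality $\Delta\le 2\norm{H_0-H_1}$ to upgrade it. Both routes are viable; the paper's is slightly more economical and avoids the auxiliary step altogether.

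The justification you give for $\Delta\le 2\norm{H_0-H_1}$ is, however, incorrect as stated. The gap is indeed shift-invariant, but the spectral range of $H_s$ is \emph{not} controlled by $\norm{H_0-H_1}$: take $H_0=H_1=\mathrm{diag}(0,M)$ with $M$ large, giving $\norm{H_0-H_1}=0$ while the gap is $M$. What actually makes the inequality true here is the overlap hypothesis, not any shift argument. Since $\ket{\alpha_0}\in\mathcal V$ and $\norm{\Pi_{\mathcal{V}}\ket{\alpha_1}}\le 1/10$, one has $\abs{\braket{\alpha_0}{\alpha_1}}\le 1/10$, hence $\norm{\ket{\alpha_1}-\ket{\alpha_0}}\ge\sqrt{9/5}>1$; on the other hand first-order perturbation theory along the path gives $\norm{(d/ds)\ket{\alpha_s}}\le\norm{H_0-H_1}/\Delta$, so $\norm{\ket{\alpha_1}-\ket{\alpha_0}}\le\norm{H_0-H_1}/\Delta$. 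Combining the two forces $\Delta\le\norm{H_0-H_1}$. You should replace the ``trivial shift'' reasoning with this argument (or sidestep the issue by adopting the $\norm{H'}^3/\Delta^4$ form of the adiabatic theorem and taking $\tau=1/\beta$ as the paper does). Finally, your constant bookkeeping is unlikely to land exactly on $100$; the paper's own prefactor appears to drop the $10^4/\eta^2$ factor from its adiabatic theorem when passing to Eq.~\ref{eq:gapBound}, so the exact constant is not something to worry about matching.
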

 The proof idea is to assume by contradiction that the gap is larger than $\sqrt[4]{{100\beta}\norm{H_0-H_1}^3}$. By the adiabatic theorem, for $\tau=1/\beta$, the system reaches distance $1/10$ from the final ground state. However, $\beta$ is too small to allow such evolution, hence the assumption is false \footnote{See proof in Supplemental Material at [URL will be inserted by publisher]}.

For the proof of Theorem \ref{thm:main}b we use the following simple Lemma, which bounds the influence of a Hamiltonian perturbation on a system's path \footnote{See proof in Supplemental Material at [URL will be inserted by publisher]}. 
\begin{lemma} [\textbf{Path shift by Hamiltonian perturbation}] \label{lem:pert}
Consider two time-dependent Hamiltonians $H(t), \widetilde H(t)$. Let $U(t)$ and $\widetilde U(t)$ be the respective evolution operators by these Hamiltonians  (e.g., when evolving by $H$, $\ket{\psi(t)}=U(t)\ket{\psi(0)}$). Then,
\begin{equation} \label{eq:pertH}
\norm{(U(\tau)-\widetilde U(\tau))\ket{\psi(0)}} \le \intop_0^\tau dt \norm{\prnt{H(t)-\widetilde H(t)}\ket{\psi(t)}}.
\end{equation}
\end{lemma}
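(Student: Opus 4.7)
The plan is to reduce the bound to a straightforward Duhamel-style integral identity relating the two evolution operators. The useful observation is that the quantity $\widetilde U^\dagger(t) U(t)$ has a derivative that naturally produces the difference $H(t)-\widetilde H(t)$; this lets the unitary pieces be peeled off cleanly when we eventually take norms.

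Concretely, I would first write down the Schrödinger equations $i\dot U(t)=H(t)U(t)$ and $i\dot{\widetilde U}(t)=\widetilde H(t)\widetilde U(t)$, from which $\dot{\widetilde U^\dagger}(t)=i\widetilde U^\dagger(t)\widetilde H(t)$. Differentiating the product gives
\begin{equation*}
\frac{d}{dt}\bigl(\widetilde U^\dagger(t)\,U(t)\bigr)=i\,\widetilde U^\dagger(t)\bigl(\widetilde H(t)-H(t)\bigr)U(t).
\end{equation*}
Integrating from $0$ to $\tau$, using $\widetilde U(0)=U(0)=\mathbbm{1}$, and then applying the resulting identity to $\ket{\psi(0)}$ yields
\begin{equation*}
\widetilde U^\dagger(\tau)U(\tau)\ket{\psi(0)}-\ket{\psi(0)}=i\intop_0^\tau dt\;\widetilde U^\dagger(t)\bigl(\widetilde H(t)-H(t)\bigr)\ket{\psi(t)},
\end{equation*}
where $\ket{\psi(t)}=U(t)\ket{\psi(0)}$.

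Next I would left-multiply both sides by the unitary $\widetilde U(\tau)$, which converts the left-hand side into $(U(\tau)-\widetilde U(\tau))\ket{\psi(0)}$ and the integrand's prefactor into $\widetilde U(\tau)\widetilde U^\dagger(t)$, still unitary. Taking norms, moving the norm inside the integral via the triangle inequality, and using the fact that unitaries preserve norms, I obtain
\begin{equation*}
\norm{(U(\tau)-\widetilde U(\tau))\ket{\psi(0)}}\le\intop_0^\tau dt\,\norm{\bigl(H(t)-\widetilde H(t)\bigr)\ket{\psi(t)}},
\end{equation*}
which is exactly the claim.

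There is no real obstacle here; the only subtlety is choosing to differentiate $\widetilde U^\dagger U$ rather than $U-\widetilde U$ directly. The latter would produce the unwanted combination $H(t)\ket{\psi(t)}-\widetilde H(t)\ket{\widetilde\psi(t)}$ (with \emph{different} states on the two sides), which would not telescope into the desired $(H-\widetilde H)\ket{\psi(t)}$ without extra work. The former choice makes the two states coincide at $\ket{\psi(t)}$ automatically, so the bound becomes a one-line norm estimate once the integral identity is in hand.
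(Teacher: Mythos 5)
Your proof is correct and yields exactly the stated bound. You take the continuous Duhamel route: differentiate $\widetilde U^\dagger(t)U(t)$, integrate, left-multiply by $\widetilde U(\tau)$, and invoke unitary invariance of the norm. The paper instead discretizes time into $r$ slices $U_j=\exp(-i\tfrac{\tau}{r}H_{j/r})$, runs a telescoping (hybrid) argument \`a la Ambainis--Regev to get $\norm{(U-\widetilde U)\psi_0}\le\sum_j\norm{E_j\psi_j}$ with $E_j=\widetilde U_j-U_j$, and then takes $r\to\infty$ to convert the sum into the integral. The two are conceptually the same principle --- interpolate between the two evolutions one step (or one infinitesimal step) at a time, and absorb the remaining unitary factors by norm invariance --- but your continuous version is cleaner: it avoids the discretization, the $O(1/r)$ error bookkeeping, and the limit argument, at the modest cost of assuming $t\mapsto U(t),\widetilde U(t)$ are strongly differentiable (which holds here since the Hamiltonians are bounded). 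Your closing remark about why differentiating $U-\widetilde U$ directly would not telescope is exactly the right subtlety to flag, and it explains why the paper's discrete hybrid argument also carefully threads $\widetilde U$-factors on the left and $U$-factors on the right of each $E_j$.
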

In the extreme case, the perturbation is in a subspace which is completely orthogonal to the path of the system and does not change it at all. Lemma \ref{lem:pert} bounds the path's shift  when the path is not in the null-space of the perturbation. Later, we use Lemma \ref{lem:pert} to upper bound the path shift when evolving  $H^{\theta_h}$ in comparison to $H^{n}$,  thus showing that the algorithm succeeds  when perturbing $\approx n/2$ of the top eigenstates..

\subsection{Proof of Theorem \ref{thm:main}}

\textit{Theorem \ref{thm:main}a} is proved by showing that the conditions for Lemma \ref{lem:stuck_qw_aqc} and Lemma \ref{lem:EscapeBoundsSuccess} hold with ${\mathcal{V}}$ being a one dimensional subspace spanned by the ground state of $H_0^{\theta}$, denoted $\alpha^{\theta}_0$, for $\theta\le \theta_\ell$. The overlap condition between the initial and final ground states is $2^{-n/2}$ as required by both Lemmas. It is left to find the value of $\theta$ for which the escape rate of $\mathcal V$ by $H^{\theta}_1$ is $o(1/\poly(n))$ small:   
\begin{equation} \label{eq:escaperatechernoff}
\begin{split}
    &\norm{\Pi_{\mathcal{V^\perp}} H^{{\theta}}_1 \ket{\alpha^{{\theta}}_0}}^2
    \\
    &=\norm{H^{{\theta}}_1 \ket{\alpha^{{\theta}}_0}}^2 - \norm{\Pi_{\mathcal{V}} H^{{\theta}}_1 \ket{\alpha^{{\theta}}_0}}^2
    \\
        &= {\bra{\alpha^{{\theta}}_0}(H_1^{\theta})^2\ket{\alpha^{{\theta}}_0}} - {\bra{\alpha^{{\theta}}_0}H_1^{\theta}\ket{\alpha^{{\theta}}_0}}^2
    \\
    &\le \theta^2 - \prnt{\frac{1}{2^n}{\sum_{k=0}^{n} {n \choose k}  \min(k,\theta)}}^2 
    \\
    &\le \theta^2 - \prnt{\theta - \frac{1}{2^n}\sum_{k<\theta} {n\choose k} (\theta-k)}^2
    \\
    &= \prnt{2\theta - \frac{1}{2^n}\sum_{k<\theta} {n\choose k} (\theta-k)} \prnt{\frac{1}{2^n}\sum_{k<\theta} {n\choose k} (\theta-k)}
    \\
     &\le 2\theta^2 e^{-\frac{(n/2-\theta)^2}{n}},
\end{split}
\end{equation}
where the sum of binomial coefficients was bounded using the Chernoff bound.  By choosing $c>1$ s.t.  $\theta = \frac{n}{2}-\sqrt{n\log^{c} n}$, we get that the escape rate is $o(1/\poly(n))$.  By Lemma \ref{lem:EscapeBoundsSuccess}  the evolution fails, and by Lemma \ref{lem:stuck_qw_aqc} the gap is super-polynomially small. 

As a side remark, note that when $\mathcal V$ is one dimensional the escape rate equals the energy uncertainty of the state $v$ spanning $\mathcal V$:
\begin{equation} \label{eq:DeltaE}
\begin{split}
    \norm{\Pi_{\mathcal V^\perp} H \ket{v}}^2&=\norm{H \ket{v}}^2-\norm{\Pi_{\mathcal V} H \ket{v}}^2
    \\
    &= \bra{v}(H)^2\ket{v}-\bra{v}H\ket{v}^2 = (\mathrm{std}_v H)^2.
\end{split}
\end{equation}
In Eq. \ref{eq:escaperatechernoff}, the escape rate of $\alpha_0^{\theta}$ by $H_1^{\theta}$ upper bounds the energy uncertainty of $\alpha_0^{\theta}$ by $H^{\theta}_s$ for all $s$. This uncertainty is related to  he quantum speed limit \cite{MT45,Bhattacharyya83}, for a time-independent Hamiltonian,
\begin{equation}
\begin{split}
    \abs{\braket{\psi(t)}{\psi(0)}} &\ge \cos^2 (\mathrm{std}_\psi H \cdot t) ~~~~~~0\le t \le \frac{\pi}{2 (\mathrm{std}_\psi H)}
\end{split}
\end{equation}

Hence, our Lemma \ref{lem:EscapeBoundsSuccess}  roughly states that if the angle between the initial ground state and the state of the system changes with slow $o(1/\poly(n))$ rate, the system  cannot reach the nearly orthogonal final ground state in polynomial time, thus failing the algorithm.

%
%
%
%

\textit{Theorem \ref{thm:main}b} uses Lemma \ref{lem:pert} to show that lowering $\theta$ from $n$ to $\theta_h$ has little effect on the system's path. Consider $H^{n}_s$ and $H^{\theta_h}_s$,  and their respective evolution operators $U^{n}(t)$ and $U^{\theta_h}(t)$. By the adiabatic theorem, for $\tau=n^4$, the path of the system propagating by $H^{n}_s$ is within distance $o(1)$ of the instantaneous ground state, but for simplicity we assume that they are equal, and leave the exact details to \footnote{See Supplementary Material at [URL will be inserted by publisher]}. The ground state ${\alpha^{n}_s}$ can be written as a tensor product:
\begin{equation}  \label{eq:ThetaNgs}
\begin{split}
\ket{\alpha^{n}_s}&=\prnt{\cos(\varphi_s)\ket{0}+\sin(\varphi_s)\ket{1}}^{\otimes n}\\&=\prnt{\cos(\xi_s)\ket{+}+\sin(\xi_s)\ket{-}}^{\otimes n}
\end{split}
\end{equation}
wherein, $\varphi_s\in[0,\pi/4]$ and $\xi_s=\pi/4-\varphi_s\in[0,\pi/4]$. Let $\delta H_s = H^{n}_s-H^{\theta_h}_s$. The integrand  in Eq. \ref{eq:pertH} is bounded by \begin{equation}
\begin{split}
\norm{{\delta H_s}\ket{\alpha^{n}_s}}
&\le h_{\max}\cdot  \max_s{\norm{\Pi_\mathcal{W} \ket{\alpha^{n}_s}}}
\\
h_{\max} &\triangleq \max_s(\norm{ \delta H_s}\le 2n
\end{split}
\end{equation}
The image of the perturbation $\delta H_s$ is the subspace $\mathcal W$, spanned by the union of states $\{\ket{j},\ket{j_+}\}$ with $h(j)>\theta_h$. 
 
The  projection of ${\alpha^{n}_s}$ only on the high-Hamming weight $\ket{j}$ states, squared,  is
\begin{equation} \label{eq:binomial}
\sum_{\substack{j \\h(j)\ge\theta_h}}\abs{\braket{j} {\alpha^{n}_s}}^2 = \sum_{k=\theta_h}^{n} {n \choose k} \cos^{2(n-k)}(\varphi_s)\sin^{2k}(\varphi_s)
\end{equation}
The overlap of ${\alpha^{n}_s}$ on the  high-Hamming weight $\ket{j_+}$ states is similar, with $\xi_s$ replacing $\varphi_s$. The RHS of Eq. \ref{eq:binomial} is the probability that the value of a binomial random variable  $X \sim {B}(n,p)$ is larger than $\theta_h$, with $p=\sin^2(\varphi_s)\le 1/2$. Since $\theta_h>n/2$ this probability is maximal when $p$ is maximal (${1}/{2}$). Using the Chernoff bound, 
\begin{equation} \label{eq:projectionWChernoff}
\begin{split}
&{\norm{\Pi_\mathcal{W} \ket{\alpha^{n}_s}}}^2 
\\
&\le
\sum_{\substack{j \\h(j)\ge\theta_h}}\abs{\braket{j} {\alpha^{n}_s}}^2+\abs{\braket{j_+} {\alpha^{n}_s}}^2
\\
&\le 2 \Pr (X\ge\theta_h) 
\le 2 e^{-\frac{(\theta_h-pn)^2}{3pn}}
\le 2 e^{-\frac{40\log n}{3}} = O(n^{-18}).
\end{split}
\end{equation}
Finally, by bounding the RHS of Eq. \ref{eq:pertH} we get
\begin{equation} \label{eq:boundingDisturbance}
\begin{split}
     \norm{(U^{n}(\tau)-U^{\theta_h}(\tau))\ket{\alpha^{n}_0}}  
     &\le
      \tau h_{\max} \max_s{\norm{\Pi_\mathcal{W} \ket{\alpha^{n}_s}}} 
      \\
      &\le n^4\cdot 2n\cdot O(n^{-9})
      =O(n^{-4}),
\end{split}
\end{equation}
which also bounds the influence of the perturbation $\delta H$ on the final state.


\subsection{Robustness}
The proof of Theorem \ref{thm:main}b reveals an interesting characteristic of  $H^{n}_s$. By  Eq. \ref{eq:projectionWChernoff},   the projection of $\alpha^{n}_s$ on $\mathcal W$ is $O(n^{-9})$, hence, by Eq. \ref{eq:boundingDisturbance}, \emph{any} perturbation $\delta H$ with $o(n^5)$ norm, whose image is $\mathcal W$, changes the  final state by $o(1)$ for $\tau=n^4$:
\begin{corollary} [\textbf{Spectrum perturbation robustness}] \label{cor:robustness}
Let $\widetilde H^{n}_0$ (respectively, $\widetilde H^{n}_1$) be equal to $H^{n}_0$ ($H^{n}_1$) except the energies of the states $\ket{j_+}$ ($\ket{j}$) with $h(j)>\theta_h$ are perturbed by $o(n^{5})$.  Evolving $\ket{\alpha_0^{n}}$  by $\widetilde H^{n}_s$ for $\tau=n^4$ yields a state within $o(1)$ distance of $\ket{\alpha_1^{n}}$. 
\end{corollary}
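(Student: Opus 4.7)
The plan is to run the calculation at the end of the proof of Theorem \ref{thm:main}b almost verbatim, with the only change being that the perturbation norm is allowed to be $o(n^5)$ rather than $h_{\max}\le 2n$; one then verifies that the product $\tau\cdot o(n^5)\cdot\max_s\norm{\Pi_\mathcal{W}\ket{\alpha_s^{n}}}$ is still $o(1)$ for $\tau=n^4$.

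Concretely, I would first set $\delta H_s \triangleq \widetilde H_s^{n}-H_s^{n}$, which by linearity equals $(1-s)(\widetilde H_0^{n}-H_0^{n})+s(\widetilde H_1^{n}-H_1^{n})$. By hypothesis, the two endpoint differences act only on states in $\mathcal{W}=\mathrm{span}\{\ket{j},\ket{j_+}:h(j)>\theta_h\}$, so the image of $\delta H_s$ is contained in $\mathcal{W}$ for every $s$, and $\norm{\delta H_s}=o(n^5)$. Applying Lemma \ref{lem:pert} to the pair $H_s^{n}$, $\widetilde H_s^{n}$ with initial state $\ket{\alpha_0^{n}}$, and using that $\delta H_{t/\tau}$ annihilates $\mathcal{W}^{\perp}$, yields
\begin{equation*}
\norm{(U^{n}(\tau)-\widetilde U^{n}(\tau))\ket{\alpha_0^{n}}} \le \intop_0^\tau dt\,\norm{\delta H_{t/\tau}}\cdot\norm{\Pi_\mathcal{W}\, U^{n}(t)\ket{\alpha_0^{n}}}.
\end{equation*}
Since $H^{n}$ is a sum of commuting single-qubit terms, $U^{n}(t)\ket{\alpha_0^{n}}$ remains a symmetric product state throughout, and the same binomial-tail Chernoff estimate used in Eq. \ref{eq:projectionWChernoff} applies to give $\norm{\Pi_\mathcal{W}\, U^{n}(t)\ket{\alpha_0^{n}}}=O(n^{-9})$. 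The integral is then bounded by $n^4\cdot o(n^5)\cdot O(n^{-9})=o(1)$. A triangle inequality combined with $\norm{U^{n}(\tau)\ket{\alpha_0^{n}}-\ket{\alpha_1^{n}}}=o(1)$ (Theorem \ref{thm:main}b) then delivers the corollary.

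The main subtlety, and the only place where one cannot literally copy the computation from Theorem \ref{thm:main}b, is applying the Chernoff bound to the actual dynamical state $U^{n}(t)\ket{\alpha_0^{n}}$ rather than to the instantaneous ground state $\ket{\alpha_s^{n}}$. Because $H^{n}_s$ factorizes over qubits, this reduces to verifying that the single-qubit excitation probabilities $\abs{\bra{1}U_1(t)\ket{+}}^2$ and $\abs{\bra{-}U_1(t)\ket{+}}^2$ remain at most $1/2$ throughout the evolution, which is an elementary Bloch-sphere computation for the single-qubit Hamiltonian $(1-s)\ketbra{-}{-}+s\ketbra{1}{1}$. This is precisely the technical point already handled in the Supplemental Material for Theorem \ref{thm:main}b and can be reused here verbatim.
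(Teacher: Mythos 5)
Your proof is correct and takes essentially the same approach as the paper: the Corollary is read off from the bound in Eq.~\ref{eq:boundingDisturbance} by replacing $h_{\max}\le 2n$ with the allowed perturbation norm $o(n^5)$, using that the image of $\delta H_s$ stays inside $\mathcal W$ and that the Chernoff estimate of Eq.~\ref{eq:projectionWChernoff} gives $\norm{\Pi_{\mathcal W}\ket{\psi(t)}}=O(n^{-9})$. The one technical point you flag — that the integrand involves the actual dynamical state $U^{n}(t)\ket{\alpha_0^{n}}$ rather than the instantaneous ground state $\ket{\alpha_s^{n}}$, and that the single-qubit excitation probabilities may exceed $1/2$ by $O(1/\sqrt{\tau})$ rather than being exactly $\le 1/2$ — is precisely the issue the Supplemental Material's proof of Theorem~\ref{thm:main}b already resolves, and your appeal to it is appropriate.
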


This corollary implies an unusual 
property of robustness in the adiabatic 
evolution with respect to $H^{n}_s$. 
Absurdly, one can choose a very large perturbation on the high Hamming-weight states, so that their energy is much {\it smaller} than that 
of the original ground state, and still this will change almost nothing in the evolution. 
In particular, one 
can assign negative energy to 
all states of at least $\theta_h$ minuses ($\ket{-}$) in $H_0^{n}$, 
and arrange it so that $\ket{+...+}$ becomes, e.g., the 10$^{\mathrm{th}}$ excited state of the initial Hamiltonian. Likewise, one can assign negative energy to states of 
at least $\theta_h$ 1's in $H_1^{n}$, such that $\ket{0...0}$ becomes, e.g.,  the 7$^{\mathrm{th}}$ excited state of the final Hamiltonian.  Yet, a system initialized to $\ket{+...+}$ would successfully reach $\ket{0...0}$.
Furthermore, by careful adjustment of these energies, the minimal gap may become arbitrarily small (e.g. zero at the end Hamiltonians), and the system would  still reach  $\ket{0...0}$\footnote{One can set the gap to be super-polynomially small for any $s$. Consider the Hamiltonian {$H^{n}_s + x\ketbra{1...1}{1...1}$}; when  {$x=O(n)$}, the perturbation {$x\ketbra{1...1}{1...1}$} have little influence on the evolution from {$\ket{+...+}$} by $H^{n}_s$. Fixing {$s$}, and considering an evolution  from {$H^{n}_s$} to {$\ketbra{1...1}{1...1}$} controlled by {$x$},  it is clear that the escape rate of  {$\mathcal V =\mathrm{span}\prnttt{\alpha^{n}_s}$} by the Hamiltonian {$\ketbra{1...1}{1...1}$} is {$o(1/\poly(n))$} small. Hence, we can find a {$x(s)$} for which the gap is super-polynomially small for all {$s$} (cf. \cite{FGGN08}).}.

The reason for this robustness is that the adiabatic path of the system is essentially confined to a subspace, 
orthogonal to that of the high-Hamming weight states (in both basis). 
The path is first limited to the symmetric subspace, whose dimension is $n+1$, because both the Hamiltonian and the initial ground state are symmetric to permuting the qubits. Inside the symmetric subspace the path of the system has very little overlap with high Hamming-weight states.


Such robustness phenomena might be useful in various contexts such as 
algorithm design and noise tolerance. 
We thus provide here a potentially 
interesting generalization: 
\begin{theorem} [\textbf{Robustness to bounded-rank perturbation}] \label{thm:robustnesss}
Let $\psi(t)$ be the wavefunction evolving by an $n$ qubit Hamiltonian $H(t)$, where $\norm{H(t)}\le h_{\max}$ for all $t\in[0,\tau]$. For any orthonormal basis of the Hilbert space, there exists a perturbation $\delta H(t)$ of rank $d\in \mathbbm N$, diagonalized in that basis, with bounded norm $g_{\max}$, s.t. evolving by $H(t)+\delta H(t)$ yields a state $\widetilde \psi(t)$, wherein
\begin{equation} \label{eq:theorem2}
\begin{split}
    \norm{\ket{\psi(\tau)} -\ket{\widetilde \psi(\tau)}} \le  \tau^2&{g_{\max}}^2 \prntt{(1-\eta) \sqrt{\frac{10    \tau h_{\max}  d}{2^n}}+\eta)}
    \\
    &\eta= \frac{1}{200}.
\end{split}
\end{equation}
\end{theorem}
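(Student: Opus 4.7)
The plan is to combine Lemma \ref{lem:pert} with a probabilistic construction of the perturbation whose support, inside the given basis, is a random $d$-element index set, and then to sharpen the resulting expectation bound into a uniform-in-time statement using continuity of $\ket{\psi(t)}$.

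First, fix the orthonormal basis $\{\ket{e_i}\}_{i=1}^{2^n}$. Any Hermitian $\delta H(t)$ that is diagonal in this basis, has rank at most $d$, and operator norm at most $g_{\max}$, is supported on some index set $S\subseteq\{1,\dots,2^n\}$ with $|S|=d$, and satisfies $\norm{\delta H(t)\ket{\psi(t)}}\le g_{\max}\norm{\Pi_S\ket{\psi(t)}}$, where $\Pi_S=\sum_{i\in S}\ketbra{e_i}{e_i}$. Applying Lemma \ref{lem:pert} with $\widetilde H(t)=H(t)+\delta H(t)$ then reduces the task to choosing a single $S$ for which $g_{\max}\intop_0^\tau\norm{\Pi_S\ket{\psi(t)}}\,dt$ meets the target bound of Eq. \ref{eq:theorem2}.

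Second, I would argue by averaging over uniformly random $S$ of size $d$. Since each basis projector is included with probability $d/2^n$, linearity of expectation gives $\mathbb{E}_S\norm{\Pi_S\ket{\psi(t)}}^2=d/2^n$ for every fixed $t$, and integrating yields $\mathbb{E}_S\intop_0^\tau\norm{\Pi_S\ket{\psi(t)}}^2\,dt=\tau d/2^n$. A Markov-type inequality in the random choice of $S$ then shows that with probability at least $1-\eta$ the trajectory-integrated squared overlap is at most $\tau d/(2^n\eta)$, while on the complementary event of probability $\eta$ one falls back on the trivial bound $\norm{\Pi_S\ket{\psi(t)}}\le 1$. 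Passing from $L^2$ to $L^1$ in time via Cauchy-Schwarz, these two contributions combine into the $(1-\eta)\sqrt{\cdots}+\eta$ structure of Eq. \ref{eq:theorem2}. To upgrade expectation to the existence of a single good $S$, one takes the $S$ minimizing the $L^1$ bound produced by the weighted two-term split. The $\tau h_{\max}$ factor inside the square root naturally arises from the trajectory smoothness estimate $\norm{\partial_t\ket{\psi(t)}}\le h_{\max}$: discretizing $[0,\tau]$ on a grid of resolution $\sim 1/h_{\max}$ gives $O(\tau h_{\max})$ sampling times at which the random-$S$ concentration applies pointwise via a union bound, while the inter-grid interpolation error is absorbed by the Lipschitz estimate. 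The leading $\tau^2 g_{\max}^2$ prefactor and the explicit constants $10$ and $\eta=1/200$ then follow from optimizing this discretization against the concentration tail.

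The main obstacle is precisely this passage from a pointwise-in-$t$ expectation bound on $\norm{\Pi_S\ket{\psi(t)}}^2$ to a bound that holds simultaneously for all $t\in[0,\tau]$ with the same index set $S$. The averaging is automatic at any single time, but the trajectory $\ket{\psi(t)}$ can roam over many basis directions as $t$ varies, so the single random $S$ must miss $\ket{\psi(t)}$ uniformly in $t$. Carefully choosing the time grid and bookkeeping the two-term split between the good-$S$ and bad-$S$ events is where most of the technical work lies, and is the natural origin of the precise constants in the stated bound; I expect this to be the step requiring the most delicate estimates.
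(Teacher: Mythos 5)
Your two core ingredients are the right ones and match the paper's: Lemma~\ref{lem:pert} to reduce to bounding $g_{\max}\int_0^\tau\norm{\Pi_\mathcal{W}\ket{\psi(t)}}\,dt$, a time discretization at resolution $\sim 1/h_{\max}$ producing $O(\tau h_{\max})$ sample states (this is exactly Lemma~\ref{lem:limitedsubspace}), and an averaging/pigeonhole argument over the given basis to pick $d$ low-overlap vectors (this is exactly Fact~\ref{fact:LowInfluenceSubBasis}). However, the way you organize these pieces diverges from the paper, and your derivation of the $(1-\eta)(\cdots)+\eta$ structure is not right. You attribute that split to a Markov tail event over random $S$ (good $S$ with probability $1-\eta$, fall back on a trivial bound otherwise). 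In the paper this split is \emph{geometric, not probabilistic}: $\ket{\psi(t)}$ is decomposed into its component inside the confining subspace $\mathcal{X}=\mathrm{span}\{\ket{\psi(t_k)}\}$ and a small residual outside $\mathcal{X}$; the $\eta=1/200$ is precisely the uniform Lipschitz interpolation leakage from Lemma~\ref{lem:limitedsubspace} (guaranteeing $\norm{\Pi_\mathcal{X}\ket{\psi(t)}}\ge 1-\eta$ for all $t$), while the $\sqrt{10\tau h_{\max} d/2^n}$ term comes from applying the averaging argument once to $\Pi_\mathcal{X}$ rather than time-point by time-point. Your proposal also carries two competing and inconsistent accounts of where $\tau h_{\max}$ enters (once from Markov at level $\tau d/(2^n\eta)$, once from the union bound over grid points), which is symptomatic of this conflation: the discretization should feed into the \emph{dimension of the subspace $\mathcal{X}$} over which you average, not into a per-time union bound. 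Cleaning this up would effectively reproduce Lemmas~\ref{lem:robustness} and~\ref{lem:limitedsubspace} of the paper, so your approach is not a genuinely different route but a tangled version of the same one, and the tangle is in exactly the bookkeeping step you flagged as the most delicate.
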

Theorem \ref{thm:robustnesss} promises that for any quantum system evolving by a bounded-norm Hamiltonian (not necessarily adiabatic),  and for any basis to the Hilbert space, one can find some basis vectors, which have very little overlap with the path of the system. Therefore, adding a Hamiltonian perturbation acting on these vectors have very little influence on the dynamics of the system.

\begin{proof} (Of Theorem \ref{thm:robustnesss})
The proof relies on two parts: firstly, note that Corollary \ref{cor:robustness} can be generalized to any system path confined to a subspace of the Hilbert space (we denote this subspace by $\mathcal X$), as we show in the following Lemma \ref{lem:robustness}.
 Secondly, we show that a time $t$ adiabatic evolution of $n$ qubits, governed by $\poly(n)$-norm Hamiltonian, is confined to a polynomial (in $n$ and $t$) dimensional subspace $\mathcal X$; this is proven in Lemma \ref{lem:limitedsubspace}.

 \begin{lemma} [\textbf{Robustness by confinement}] \label{lem:robustness}
 Consider an evolution of a system by $H(t)$, where the path of the system is $\ket{\psi(t)}$.  Additionally, consider a subspace $\mathcal X$ confining the path:  $\norm{\Pi_{\mathcal{X}} \ket{\psi(t)}} \ge 1-\eta$,  for $t\in[0,\tau]$.
%
For any orthonormal basis of the Hilbert space, there exists a perturbation $\delta H(t)$ of rank $d\in \mathbbm N$, diagonalized in that basis, with bounded norm $g_{\max}$, s.t. evolving by $H(t)+\delta H(t)$  yields a state $\widetilde \psi(t)$, wherein
\begin{equation}
    \norm{\ket{\psi(\tau)} -\ket{\widetilde \psi(\tau)}} \le  \tau g_{\max} \prntt{(1-\eta) \sqrt{ 2^{-n} d \dim {\mathcal{X}}})+\eta}
\end{equation}
\end{lemma}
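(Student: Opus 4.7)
The plan is to combine the path-shift bound of Lemma \ref{lem:pert} with a Frobenius-norm averaging argument that chooses the support of the perturbation. The key structural observation is that since $\delta H(t)$ is diagonalized in the prescribed basis $\{e_i\}_{i=1}^{2^n}$ and has rank $d$, its image is contained in the span of some $d$ of the basis vectors $\{e_i\}_{i\in S}$, and we are free to pick $S$. We design $\delta H(t)$ by first choosing $S$ (hence its support $\mathcal D=\mathrm{span}\{e_i:i\in S\}$) and then setting its nonzero diagonal entries arbitrarily, subject to $\norm{\delta H(t)}\le g_{\max}$.

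First, I would invoke Lemma \ref{lem:pert} with $H$ and $\widetilde H=H+\delta H$ to get
\[
\norm{\ket{\psi(\tau)}-\ket{\widetilde\psi(\tau)}} \le \int_0^\tau \norm{\delta H(t)\ket{\psi(t)}}\,dt \le \tau\, g_{\max}\, \max_{t\in[0,\tau]}\norm{\Pi_\mathcal D\ket{\psi(t)}},
\]
using that $\delta H(t)$ acts trivially on $\mathcal D^\perp$ and has operator norm at most $g_{\max}$. The problem reduces to choosing $S$ of size $d$ so that $\Pi_\mathcal D\ket{\psi(t)}$ is uniformly small along the trajectory.

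Second, I would select $S$ by a pigeonhole on the trace identity $\sum_{i=1}^{2^n}\bra{e_i}\Pi_\mathcal X\ket{e_i}=\tr\Pi_\mathcal X=\dim\mathcal X$: taking $S$ to be the indices of the $d$ smallest diagonal entries, we get $\sum_{i\in S}\bra{e_i}\Pi_\mathcal X\ket{e_i}\le d\cdot\dim\mathcal X/2^n$. Consequently
\[
\norm{\Pi_\mathcal D\Pi_\mathcal X}^2 \le \norm{\Pi_\mathcal D\Pi_\mathcal X}_F^2=\tr(\Pi_\mathcal X\Pi_\mathcal D\Pi_\mathcal X)=\sum_{i\in S}\bra{e_i}\Pi_\mathcal X\ket{e_i}\le \frac{d\cdot\dim\mathcal X}{2^n}.
\]
Decomposing $\ket{\psi(t)}=\Pi_\mathcal X\ket{\psi(t)}+\Pi_{\mathcal X^\perp}\ket{\psi(t)}$ and using $\norm{\Pi_\mathcal D}\le 1$ together with the triangle inequality yields
\[
\norm{\Pi_\mathcal D\ket{\psi(t)}}\le \norm{\Pi_\mathcal D\Pi_\mathcal X}\cdot\norm{\Pi_\mathcal X\ket{\psi(t)}}+\norm{\Pi_{\mathcal X^\perp}\ket{\psi(t)}}.
\]
Inserting the Frobenius estimate and the confinement bounds $\norm{\Pi_\mathcal X\ket{\psi(t)}}\le 1-\eta$, $\norm{\Pi_{\mathcal X^\perp}\ket{\psi(t)}}\le \eta$ (the normalization implicit in the lemma's statement) and multiplying by $\tau g_{\max}$ gives the stated inequality.

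The main obstacle I expect is not any single computation but articulating the averaging step correctly: the rank-$d$ freedom must be used to pick the $d$ basis directions \emph{least} aligned with $\mathcal X$, and one must then bridge the Frobenius estimate on $\Pi_\mathcal D\Pi_\mathcal X$ to an operator-norm bound that applies uniformly at every $t$. A secondary technicality is reconciling the constants in the final inequality with the confinement hypothesis as literally written: $\norm{\Pi_\mathcal X\ket{\psi(t)}}\ge 1-\eta$ strictly gives $\norm{\Pi_{\mathcal X^\perp}\ket{\psi(t)}}\le \sqrt{2\eta-\eta^2}$ rather than $\eta$, so the stated form is recovered either by a slight rephrasing of the hypothesis or by absorbing the small factor into the constant $\eta$.
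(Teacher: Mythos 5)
Your proposal is correct and follows essentially the same route as the paper: the Frobenius-norm plus averaging argument you spell out is exactly the content of the paper's Fact~\ref{fact:LowInfluenceSubBasis} (operator norm $\le$ Frobenius norm, then pigeonhole on the trace identity $\sum_i \bra{e_i}\Pi_{\mathcal X}\ket{e_i}=\dim\mathcal X$), and the rest of your argument (invoking Lemma~\ref{lem:pert}, bounding $\norm{\delta H(t)\ket{\psi(t)}}\le g_{\max}\norm{\Pi_{\mathcal D}\ket{\psi(t)}}$ via the support of $\delta H$, and then splitting $\ket{\psi(t)}$ along $\Pi_{\mathcal X}$ and $\Pi_{\mathcal X^\perp}$) matches the paper's proof sketch. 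Your caveat about the constants is a genuine and correct observation: the hypothesis $\norm{\Pi_{\mathcal X}\ket{\psi(t)}}\ge 1-\eta$ only gives $\norm{\Pi_{\mathcal X^\perp}\ket{\psi(t)}}\le\sqrt{2\eta-\eta^2}$ (not $\eta$), and it gives $\norm{\Pi_{\mathcal X}\ket{\psi(t)}}\le 1$ (not $\le 1-\eta$), so the paper's displayed bound is slightly loose in its constants; this does not affect the qualitative conclusion or the way the lemma is later used.
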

Lemma \ref{lem:robustness} can be interpreted as follows. Clearly, perturbations in $\mathcal X^\perp$ have no influence on the system's path and their rank may reach $\dim \mathcal H - \dim \mathcal X=2^n-\poly(n)$; however $\mathcal X$ is usually unknown. By Lemma \ref{thm:robustnesss}, given an \emph{arbitrary} basis, there is a perturbation of rank $\Omega(2^n/\poly(n))$, diagonalized in this basis, with negligible-influence on the system's path
\footnote{In retrospect, we could have proved Theorem \ref{thm:main}b using Theorem \ref{thm:robustnesss}. We have shown that  the higher the Hamming weight of the {$\ket{j},\ket{j_+}$} states, the least projection they have on {$\mathcal X$} (see Eq. \ref{eq:binomial}). Hence these states are the ideal candidates to perturb in Theorem \ref{thm:robustnesss}. The top {$2^n/\poly(n)$} highest Hamming weight states are the ones with Hamming weight  {$\ge n/2+O(\sqrt{n\log n})$}, which is in the order of {$\theta_h$}.
}. 

\begin{proof} (Of Lemma \ref{lem:robustness})
Consider a basis to the Hilbert space; on average, the projection of a random basis vector on $\mathcal X$ is neglectable if $\dim \mathcal X = O(\poly(n))$. By a probabilistic argument, for any basis there is at least a polynomial fraction of basis vectors, which have a very little overlap with the path of the system. Hamiltonian perturbations on the subspace $\mathcal W$ spanned by these vectors have a limited influence on  the system's path.  The above simple geometric statement can be 
stated 
formally as follows: \footnote{See Supplementary Material at [URL will be inserted by publisher]},
\begin{fact} \label{fact:LowInfluenceSubBasis}
Let $\prnttt{w_i}$ be a basis to a Hilbert space $\mathcal{H}$, and let ${\mathcal{X}}$ be a subspace in $\mathcal{H}$. For any $d\le\dim{\mathcal{H}}$ there exists a subspace ${\mathcal{W}}$ spanned by $d$ basis vectors of $\prnttt{w_i}$ s.t.  $\norm{\Pi_{\mathcal{W}} \Pi_{\mathcal{X}}}^2\le d \frac{\dim {\mathcal{X}} }{\dim {\mathcal H}}$.
\end{fact}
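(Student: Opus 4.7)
\textbf{Proof plan for Fact \ref{fact:LowInfluenceSubBasis}.} The plan is to reduce the statement to a direct averaging argument over the basis $\{w_i\}$. Without loss of generality, I take the basis to be orthonormal (if it is not, one can orthonormalize it or work in the natural dual basis; the geometric content is the same). The key observation is that for each basis vector, the scalar
\begin{equation*}
a_i \;\triangleq\; \norm{\Pi_{\mathcal{X}} \ket{w_i}}^2 \;=\; \bra{w_i}\Pi_{\mathcal{X}}\ket{w_i}
\end{equation*}
measures how much of $w_i$ lies in $\mathcal{X}$, and these numbers sum to a quantity that does not depend on the choice of basis, namely
\begin{equation*}
\sum_{i=1}^{\dim \mathcal{H}} a_i \;=\; \sum_i \bra{w_i}\Pi_{\mathcal{X}}\ket{w_i} \;=\; \tr(\Pi_{\mathcal{X}}) \;=\; \dim \mathcal{X}.
\end{equation*}

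Next I would use a simple pigeonhole / averaging step. Since the $a_i$ average to $\dim \mathcal{X}/\dim \mathcal{H}$, if one selects the $d$ basis vectors with the \emph{smallest} values of $a_i$, their sum is bounded by $d$ times the overall average, i.e.\ at most $d\,\dim \mathcal{X}/\dim \mathcal{H}$. Let $\mathcal{W}$ be the span of these $d$ vectors and let $\Pi_{\mathcal{W}}=\sum_{i\in W}\ketbra{w_i}{w_i}$ be the corresponding orthogonal projector.

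The final step is to bound the operator norm $\norm{\Pi_{\mathcal{W}}\Pi_{\mathcal{X}}}$ via the Frobenius norm, which is exactly where the sum $\sum_{i\in W} a_i$ appears:
\begin{equation*}
\norm{\Pi_{\mathcal{W}}\Pi_{\mathcal{X}}}^2 \;\le\; \norm{\Pi_{\mathcal{W}}\Pi_{\mathcal{X}}}_F^2 \;=\; \tr\bigl(\Pi_{\mathcal{X}}\Pi_{\mathcal{W}}\Pi_{\mathcal{X}}\bigr) \;=\; \sum_{i\in W} a_i \;\le\; d\,\frac{\dim \mathcal{X}}{\dim \mathcal{H}},
\end{equation*}
which is precisely the claimed bound. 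There is no real obstacle here; the statement is essentially a quantitative pigeonhole on the diagonal entries of $\Pi_{\mathcal{X}}$ in the chosen basis. The only subtlety worth flagging is that the bound uses the Frobenius norm as an intermediate step to control the operator norm; this is lossy in general but already tight enough for the use in Lemma \ref{lem:robustness}, where only the quantity $\sqrt{d\,\dim\mathcal{X}/\dim\mathcal{H}}$ is needed.
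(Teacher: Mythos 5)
Your proof is correct and follows essentially the same route as the paper's: compute the average of $\bra{w_i}\Pi_{\mathcal X}\ket{w_i}$ via $\tr(\Pi_{\mathcal X})$, pick the $d$ basis vectors with below-average values, and bound $\norm{\Pi_{\mathcal W}\Pi_{\mathcal X}}^2$ by the sum of those diagonal entries. The one small difference is that you justify the last inequality explicitly by passing through the Frobenius norm ($\norm{A}\le\norm{A}_F$), whereas the paper writes the same bound more tersely without naming it; your version is the cleaner exposition of the same step.
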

For any choice of basis (e.g., the basis diagonalizing  the initial/final Hamiltonians), there are $d\ll\frac{\dim \mathcal H}{\dim \mathcal X}$ basis vectors, spanning a subspace $\mathcal W$, which has a small projection on $\mathcal X$. By Lemma \ref{lem:pert}, Hamiltonian perturbations on $\mathcal W$ have a limited influence on the final state of the system.

By Fact \ref{fact:LowInfluenceSubBasis}, for any choice of basis, and for any $d$ there exists a subspace ${\mathcal{W}}$ of dimension $d$ s.t $\norm{\Pi_{\mathcal{W}}\Pi_{\mathcal{X}}}\le \sqrt{2^{-n}d \dim {\mathcal{X}}}$. By choosing $\delta H$ whose image is ${\mathcal{W}}$, the RHS of Eq. \ref{eq:pertH} can be bounded and the proof follows.
\end{proof}

The second part of the proof for Theorem \ref{thm:robustnesss} that a system evolving by $O(\poly(n))$ norm Hamiltonians for $O(\poly(n))$ time, covers approximately $O(\poly(n))$ dimensional subspace. 
\begin{lemma} [\textbf{Time induced confinement}] \label{lem:limitedsubspace}
Let $\psi(t)$ be the state of the system evolving by an $n$ qubit Hamiltonian $H(t)$, wherein $\norm{H(t)}\le h_{\max}$.  For any time $\tau>0$, there exists a subspace $\mathcal X$ of dimension $10\tau h_{\max}$, s.t.
\begin{equation}
    \norm{\Pi_{\mathcal X} \ket{\psi(t)} } \ge 1-\frac{1}{200}~~~~~ \forall t\in[0,\tau].
\end{equation}
\end{lemma}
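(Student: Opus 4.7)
The plan is a simple time-discretization argument exploiting the fact that the Schr\"odinger evolution moves the state at bounded speed in Hilbert space. The key preliminary estimate is a Lipschitz-in-time bound: from $i\partial_t \ket{\psi(t)} = H(t)\ket{\psi(t)}$ and $\norm{H(t)}\le h_{\max}$ I immediately get $\norm{\partial_t \ket{\psi(t)}} \le h_{\max}$, and integrating yields $\norm{\ket{\psi(t)}-\ket{\psi(s)}} \le h_{\max}|t-s|$ for all $s,t\in[0,\tau]$. This is the only analytic input needed.

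Given this estimate, I would sample the trajectory on a fine enough uniform grid and take $\mathcal X$ to be the span of the snapshots. Concretely, set $N=10\tau h_{\max}$, partition $[0,\tau]$ into $N$ equal subintervals of length $\tau/N = 1/(10 h_{\max})$, let $t_i$ ($i=1,\dots,N$) be their midpoints, and define $\mathcal X \triangleq \mathrm{span}\{\ket{\psi(t_1)},\dots,\ket{\psi(t_N)}\}$, whose dimension is at most $N=10\tau h_{\max}$.

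To verify the projection inequality, fix any $t\in[0,\tau]$ and let $t_i$ be its nearest midpoint, so $|t-t_i|\le \tau/(2N) = 1/(20 h_{\max})$. The Lipschitz bound then gives $\norm{\ket{\psi(t)}-\ket{\psi(t_i)}}\le 1/20$. Since $\ket{\psi(t_i)}\in \mathcal X$, the orthogonal component is controlled by this distance: $\norm{\Pi_{\mathcal X^\perp}\ket{\psi(t)}}\le 1/20$, hence $\norm{\Pi_{\mathcal X}\ket{\psi(t)}} \ge \sqrt{1-1/400} \ge 1-1/200$, as required.

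I do not expect a real obstacle here; the conceptual content is simply that a unit-speed path traced out for time $\tau$ lives in a ``thin tube'' that is well-approximated by poly($\tau h_{\max}$) snapshots, even inside an exponentially large Hilbert space. The only item to pin down is the constants: choosing midpoints rather than endpoints halves the required grid size and matches the stated bound $10\tau h_{\max}$ cleanly, with $1-1/200$ comfortable slack coming from $\sqrt{1-1/400} > 1-1/200$.
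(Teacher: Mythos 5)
Your proposal is correct and takes essentially the same route as the paper: discretize time into $O(\tau h_{\max})$ snapshots, let $\mathcal X$ be their span, and use the Lipschitz-in-time bound $\norm{\ket{\psi(t)}-\ket{\psi(s)}}\le h_{\max}|t-s|$ (which the paper obtains by applying its Lemma on path shift by Hamiltonian perturbation with $\widetilde H\equiv 0$, whereas you derive it directly from Schr\"odinger's equation; these are the same estimate). The only cosmetic difference is in the last numerical step: the paper samples at left endpoints $t_k=k/(10h_{\max})$, gets distance $\le 1/10$, and converts this to the projection bound via $\mathrm{Re}\,\braket{\psi(t)}{\psi(t_k)}\ge 1-1/200$, whereas you sample at midpoints to tighten the distance to $1/20$, which is needed because you route through $\norm{\Pi_{\mathcal X^\perp}\psi(t)}\le 1/20$ and the lossier inequality $\sqrt{1-1/400}\ge 1-1/200$. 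Both choices give the claimed dimension and constant, so the argument goes through either way.
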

\begin{proof}
Let $t_k = \frac{k}{10h_{\max}}$. The dimension of the subspace 
\begin{equation}
    \mathcal X = \mathrm{span}\prnt{\prnttt{\psi(t_k)|~ k=0,1,...10\tau h_{\max}-1}}
\end{equation} is at most $10\tau h_{\max}$, and by definition $\psi(t_k)\in \mathcal X$. It is left to lower-bound  the projection of $\psi(t)$ on $\mathcal X$ for  $t\neq t_k$. By Lemma \ref{lem:pert}, for the time interval $t \in [t_k,t_{k+1}]$, we get
\begin{equation}
\begin{split}
\norm{\ket{\psi(t)}-\ket{\psi(t_k)}} & \le \frac{1}{10},
\\
\abs{\braket{\psi(t)}{\psi(t_k)}}  & \ge \frac{1}{200}
\end{split}
\end{equation}
hence, for all $t\in [0,\tau]$,
\begin{equation}
\norm{\Pi_{\mathcal X} \ket{\psi(t)}} \ge   1- \frac{1}{200}.
\end{equation}
\end{proof}

\renewcommand{\qed}{\hfill$\blacksquare$}
The proof of Theorem \ref{thm:robustnesss} follows.


\end{proof}



\section{Discussion}


We have presented the $\theta$-Hamiltonian model and calculated the values of $\theta$ for which the final ground state is successfully found. Our analysis shows that a phase transition occurs at a surprisingly high point in the spectrum; moreover, energy states above the threshol have little or no effect on the evolution of the system to the extent that they can be considered to be in an orthogonal subspace (on the other hand, perturbing intermediate states with hamming weight between $\theta_\ell$ and $\theta_h$ can cause the gap to vanish). 

These results call for a more refined study of the dependence of the minimal gap on the spectra of both initial and final Hamiltonians; Such a study could be 
an important starting point towards 
the design of new adiabatic quantum algorithms, in less structured or symmetric settings. 

In order to prove our results, we have introduced the notion of escape rate and discussed its relation to the success of the evolution in finding the final ground state, and to the minimal gap. The adiabatic theorem provides one piece of the puzzle: it states that a large gap implies successful evolution. By Lemmas \ref{lem:EscapeBoundsSuccess}, \ref{lem:stuck_qw_aqc}, slow escape rates cause algorithms to fail, and, using the adiabatic theorem, infer super-polynomially small gap. On the other hand, Corollary \ref{cor:robustness} gives an example for a successful evolution even when the gap is super-polynomially small.
It seems that the notion of escape rate could thus be useful in situations where 
the minimal gap being large is too strict a condition or one which is too hard to prove. 

Our results reveal new interesting robustness properties in adiabatic 
evolutions. 
Theorem \ref{thm:robustnesss}  shows that the robustness to large perturbations in part of the spectrum of $H^{n}_0,H^{n}_1$ can appear in any bounded-norm time-dependent Hamiltonian regardless of symmetry. This may imply robustness of adiabatic evolutions to certain types of physical errors; also, this might allow some relaxation of the requirement on minimal gap when designing adiabatic quantum algorithms, because at least $1/\poly(n)$ fraction of the spectrum has little influence on the evolution of the system. Both directions remain to be explored.

Finally, and more technically, the problem of proving that for $\theta>\theta_h$ the gap is large (or maybe that it is not) remains open. This is interesting to clarify - and highlights yet again that it may be beneficial to study the success of adiabatic algorithms using a more refined tool than just the spectral gap.

\paragraph{Acknowledgments:}
\begin{acknowledgements}
The authors thank Zuzana Gavorova and Itay Hen
 for the helpful discussions. The work is supported by  ERC grant number 280157,  and Simons foundation grant number 385590. 
\end{acknowledgements}

\bibliography{bib}

\onecolumngrid
\newpage
\section{Supplemental Material}
 \subsection{Gap proof for $\theta=1,n$}
In this section we prove the minimal gap for $H^A_s$ and $H^B_s$ which corresponds to $\theta=1,n$ respectively:
\begin{claim}
The minimal gap of $H^A_s$ is $2^{-n/2}$
\end{claim}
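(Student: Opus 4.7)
My plan is to reduce the problem to a two-dimensional invariant subspace and compute the gap there explicitly.

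First I would write $H^A_s = \mathbbm{1} - M_s$ with $M_s = (1-s)\ketbra{+\cdots +}{+\cdots +} + s\ketbra{0\cdots 0}{0\cdots 0}$. Since $M_s$ has rank at most $2$, its support lies in the two-dimensional subspace $\mathcal{S}=\mathrm{span}\{\ket{+\cdots +},\ket{0\cdots 0}\}$, and on $\mathcal{S}^\perp$ we have $H^A_s\equiv\mathbbm{1}$. Thus the full spectrum of $H^A_s$ consists of the eigenvalue $1$ (with multiplicity $2^n-2$) together with the two eigenvalues of $H^A_s|_{\mathcal{S}}$.

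Next, in an orthonormal basis of $\mathcal{S}$ (e.g.\ taking $\ket{+\cdots +}$ as the first vector and the Gram--Schmidt orthogonalization of $\ket{0\cdots 0}$ against it as the second), and using the overlap $c\triangleq\braket{+\cdots +}{0\cdots 0}=2^{-n/2}$, the matrix of $M_s$ restricted to $\mathcal{S}$ becomes
\begin{equation*}
M_s|_{\mathcal{S}} = \begin{pmatrix} (1-s)+sc^2 & sc\sqrt{1-c^2}\\ sc\sqrt{1-c^2} & s(1-c^2)\end{pmatrix}.
\end{equation*}
A short calculation gives $\tr M_s|_{\mathcal{S}}=1$ and $\det M_s|_{\mathcal{S}}=s(1-s)(1-c^2)$, so its eigenvalues are $\tfrac{1}{2}\bigl(1\pm\sqrt{1-4s(1-s)(1-c^2)}\bigr)$. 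Consequently the two eigenvalues of $H^A_s|_{\mathcal{S}}$ are $\tfrac{1}{2}\bigl(1\mp\sqrt{1-4s(1-s)(1-c^2)}\bigr)$, and their difference—the gap within $\mathcal{S}$—equals $\sqrt{1-4s(1-s)(1-c^2)}$.

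I would then minimize over $s\in[0,1]$: the quantity $4s(1-s)$ is maximized at $s=\tfrac12$ with value $1$, so the minimal in-subspace gap is $\sqrt{c^2}=2^{-n/2}$. Finally I would verify that the leftover eigenvalue $1$ coming from $\mathcal{S}^\perp$ never produces a smaller gap: the ground energy is $\tfrac{1}{2}\bigl(1-\sqrt{1-4s(1-s)(1-c^2)}\bigr)\le\tfrac{1}{2}$, and the first excited energy within $\mathcal{S}$ is $\tfrac{1}{2}\bigl(1+\sqrt{1-4s(1-s)(1-c^2)}\bigr)\le 1$, which sits below the $\mathcal{S}^\perp$ eigenvalue $1$; hence the relevant gap is indeed the in-subspace one, and $\min_s\Delta(s)=2^{-n/2}$. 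There is no real obstacle here beyond keeping the $2\times 2$ bookkeeping tidy; the argument is entirely a two-level reduction exploiting that both projectors share a common rank-$2$ support.
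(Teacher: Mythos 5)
Your proof is correct and follows essentially the same route as the paper: both reduce $H^A_s$ to its $2\times 2$ block on $\mathrm{span}\{\ket{+\cdots+},\ket{0\cdots 0}\}$ via the same Gram--Schmidt basis, compute the eigenvalue pair $\tfrac{1}{2}\bigl(1\pm\sqrt{1-4s(1-s)(1-c^2)}\bigr)$, and minimize at $s=\tfrac12$ to get $|c|=2^{-n/2}$. The only cosmetic difference is that you diagonalize the rank-$2$ part $M_s$ via trace/determinant rather than writing the $H^A_s$ block directly, and you make explicit the (correct) check that the degenerate eigenvalue $1$ from $\mathcal{S}^\perp$ never undercuts the in-subspace gap, which the paper leaves implicit.
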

\begin{proof}
Let $\ket{\alpha}=\ket{+\dots+}$, $\ket{\beta}=\ket{0\dots0}$. The Hamiltonian $H^A_s$ acts non-trivially on a two dimensional subspace spanned by $\alpha,\beta$ and be written as:
\begin{equation}
\begin{split}
&H^A_s=\mathbbm{1}_{2^n}-(1-s)\ketbra{\alpha}{\alpha}-s\ketbra{\beta}{\beta}=
\mathbbm{1}_{2^{n}-2}\oplus\prnt{
\begin{array}{cc}
s\prnt{1-\abs{\gamma}^2} & -s\gamma\sqrt{1-\abs{\gamma}^2} \\ 
-s\gamma^*\sqrt{1-\abs{\gamma}^2} & 1-s \prnt{1-\abs{\gamma}^2}
\end{array}}_{v_1,v_2},
\end{split}
\end{equation}
where the 2 dimensional matrix is written in the basis of  $\ket{v_1}=\ket{\alpha},\ket{v_2}=\frac{\ket{\beta}-\gamma\ket{\alpha}}{\sqrt{1-\abs{\gamma}^2}}$, and  $\gamma=\braket{\alpha}{\beta}$. In the subspace, the two eigenvalues are 
\begin{equation}
\lambda_{\pm}=\frac{1}{2}\pm\sqrt{\frac{1}{4}-(s-s^2)(1-\abs{\gamma}^2) }.
\end{equation}
The gap is minimum at $s=\frac{1}{2}$, where $\lambda_+-\lambda_-=\abs{\gamma}=\abs{\braket{\alpha}{\beta}}$. Since $\ket{\alpha}=\ket{+...+}$ and $ \ket{\beta}=\ket{0...0}$, the minimal gap equals $2^{-n/2}$. 

\end{proof}

\begin{claim}
The minimal gap of $H^B_s$ is $1/\sqrt{2}$.
\end{claim}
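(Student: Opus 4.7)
The plan is to exploit the fact that $H^B_s$ has no interaction terms: it is a sum over $k$ of operators that act non-trivially only on the $k$th qubit. Explicitly, I would write
\begin{equation}
H^B_s = (1-s)\sum_{k=1}^n \ketbra{-}{-}_k + s\sum_{k=1}^n \ketbra{1}{1}_k = \sum_{k=1}^n h_s^{(k)},
\end{equation}
where each $h_s^{(k)}$ acts as a fixed single-qubit operator $h_s$ on qubit $k$ and as the identity elsewhere. Because the terms commute across qubits, the spectrum of $H^B_s$ is the Minkowski sum of the $n$ copies of the spectrum of $h_s$. In particular, the ground state of $H^B_s$ is the tensor product of single-qubit ground states of $h_s$, and the first excited state is obtained by exciting exactly one qubit to the upper level of $h_s$. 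Hence the gap of $H^B_s$ equals the gap of the single-qubit operator $h_s$.

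Next I would reduce $h_s$ to Pauli form. Using $\ketbra{-}{-} = \tfrac{1}{2}(\mathbbm{1}-\sigma_x)$ and $\ketbra{1}{1} = \tfrac{1}{2}(\mathbbm{1}-\sigma_z)$,
\begin{equation}
h_s = \tfrac{1}{2}\mathbbm{1} - \tfrac{1-s}{2}\sigma_x - \tfrac{s}{2}\sigma_z,
\end{equation}
whose two eigenvalues are $\tfrac{1}{2} \pm \tfrac{1}{2}\sqrt{(1-s)^2 + s^2}$. The single-qubit gap is therefore $\sqrt{(1-s)^2 + s^2}$, and by the previous paragraph this is also the gap of $H^B_s$ for all $s \in [0,1]$.

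Finally, I would minimize the gap over $s$. The function $(1-s)^2 + s^2$ is strictly convex with minimum at $s=1/2$, where it equals $1/2$, so the minimal gap of $H^B_s$ over $s\in[0,1]$ is $\sqrt{1/2} = 1/\sqrt{2}$, as claimed. There is no real obstacle here; the only point worth being careful about is the reduction from the $n$-qubit gap to the single-qubit gap, which uses that the summands commute and that the $n$-qubit ground state is a tensor product, so excitations add and the smallest excitation energy is the single-qubit gap.
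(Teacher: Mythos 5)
Your proof is correct and follows essentially the same route as the paper: decompose $H^B_s$ into commuting single-qubit terms, compute the single-qubit gap, and note it is minimized at $s=1/2$. The only cosmetic difference is that you diagonalize $h_s$ via the Pauli form $\tfrac{1}{2}\mathbbm{1}-\tfrac{1-s}{2}\sigma_x-\tfrac{s}{2}\sigma_z$ while the paper works with the $2\times 2$ matrix entries directly; both give $\sqrt{(1-s)^2+s^2}=\sqrt{1-2(s-s^2)}$ for the single-qubit gap.
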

\begin{proof}
The tensor product form of $H^B$ simplifies the computation:
\begin{equation}
\begin{split}
H^B_s&=\sum_{k=1}^n \prntt{\prnt{1-s}\ketbra{-}{-}+s\ketbra{1}{1}}_k
=\sum_{k=1}^n\prnt{\begin{array}{cc}
(1-s)/2 & -(1-s)/2 \\ 
-(1-s)/2 & (1+s)/2
\end{array} }_k
\end{split}
\end{equation} 
where $k$ is the qubit's index. The eigenvalue of each two dimensional matrix are in the form  $\lambda_\pm=\frac{1 \pm \sqrt{1-2(s-s^2)}}{2}$, therefore the eigenvalues of $H^B$ are in the form $\lambda_\ell=(n-\ell)\lambda_++\ell\lambda_-$, where $\ell\in\{0,1,...n\}$. The minimal gap (denoted $\Delta$) between the ground state and the first excited state is at $s=1/2$:
\begin{equation}
\Delta (H^B_s) = \min_s (\lambda_1-\lambda_0) = \min_s ( \lambda_+-\lambda_-)=\frac{1}{\sqrt{2}}
\end{equation}
\end{proof}
\setcounter{lemma}{0}
\subsection{Proof of Lemma \ref{lem:EscapeBoundsSuccess}}

\begin{lemma} [\textbf{Escape rate bounds success}]
Let $H_s$ be an $n$ qubit Hamiltonian evolution by interpolation s.t. $\norm{H_s}=O(\poly(n))$.
 Additionally, let ${\mathcal{V}}$ be a subspace spanned by eigenstates  of $H_0$ whose projection $\Pi_{\mathcal{V}}$ satisfies $\norm{\Pi_{\mathcal{V}}\ket{\alpha_0}}=1$, and $\norm{\Pi_{\mathcal{V}}\ket{\alpha_1}}=o(1/\poly(n))$.  When adiabating $\alpha_0$ from $H_0$ to $H_1$ in time $\tau=O(\poly(n))$, while the escape rate of $\mathcal V$ by $H_1$ is $\beta<\pi/2\tau$, the projection of the final state on $\alpha_1$ is $\le  \sin(\beta\tau)+o(1/\poly(n))$.
\end{lemma}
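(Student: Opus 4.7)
My plan is to track the amplitude of the evolving state inside $\mathcal V$ along the \emph{continuous} dynamics, derive an angular differential inequality driven entirely by the escape rate $\beta$, and then convert the resulting leakage bound into a bound on $\abs{\braket{\alpha_1}{\psi(\tau)}}$ using the near-orthogonality of $\alpha_1$ to $\mathcal V$. One could instead proceed via the Trotter split suggested in the hint, approximating $U(0,\tau)$ by a product $\prod_k e^{-i(1-s_k)H_0\Delta t}e^{-is_k H_1 \Delta t}$ and controlling leakage slice-by-slice, but the continuous argument is cleaner and avoids bookkeeping of first-order Trotter errors that scale like $\norm{[H_0,H_1]}^2\tau\Delta t$.

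The core calculation is a time derivative. Differentiating $\norm{\Pi_{\mathcal V}\ket{\psi(t)}}^2 = \bra{\psi}\Pi_{\mathcal V}\ket{\psi}$ and using $\ket{\dot\psi}=-iH(t)\ket\psi$ gives
\begin{equation*}
\frac{d}{dt}\norm{\Pi_{\mathcal V}\ket\psi}^2 \;=\; -2\,\mathrm{Im}\bra{\Pi_{\mathcal V}\psi}H(t)\ket\psi.
\end{equation*}
Since $\mathcal V$ is spanned by eigenstates of $H_0$, the operator $\Pi_{\mathcal V}H_0$ is Hermitian (product of commuting Hermitians), so $\bra{\Pi_{\mathcal V}\psi}H_0\ket\psi$ is real and drops out of the imaginary part; only the $sH_1$ piece contributes. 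Writing $\ket\psi = \cos\theta\,\ket v + \sin\theta\,\ket w$ with unit $v\in\mathcal V, w\in\mathcal V^\perp$, and using Hermiticity together with $w\in\mathcal V^\perp$ to rewrite $\braket{w}{H_1 v} = \braket{w}{\Pi_{\mathcal V^\perp}H_1 v}$, I would obtain
\begin{equation*}
\abs{\mathrm{Im}\bra{\Pi_{\mathcal V}\psi}H_1\ket\psi} = \sin\theta\cos\theta\cdot\abs{\mathrm{Im}\braket{w}{H_1 v}} \le \sin\theta\cos\theta\cdot\beta,
\end{equation*}
where the last step invokes the escape-rate bound $\norm{\Pi_{\mathcal V^\perp}H_1 v}\le\beta$. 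Combined with $\frac{d}{dt}\sin^2\theta = 2\sin\theta\cos\theta\,\dot\theta$, this yields the clean angular estimate $\abs{\dot\theta(t)}\le s(t)\,\beta$ with $\theta(0)=0$.

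Integrating against $s(t)=t/\tau$ gives $\theta(\tau)\le \beta\tau/2$, hence $\norm{\Pi_{\mathcal V^\perp}\ket{\psi(\tau)}}=\sin\theta(\tau)\le \sin(\beta\tau/2)\le \sin(\beta\tau)$, where monotonicity of $\sin$ uses the hypothesis $\beta<\pi/(2\tau)$. The proof then closes by a triangle inequality: decomposing $\ket{\alpha_1}=\Pi_{\mathcal V}\ket{\alpha_1}+\Pi_{\mathcal V^\perp}\ket{\alpha_1}$ and applying Cauchy--Schwarz to each piece,
\begin{equation*}
\abs{\braket{\alpha_1}{\psi(\tau)}} \le \norm{\Pi_{\mathcal V}\ket{\alpha_1}}\cdot\norm{\Pi_{\mathcal V}\ket{\psi(\tau)}} + \norm{\Pi_{\mathcal V^\perp}\ket{\alpha_1}}\cdot\norm{\Pi_{\mathcal V^\perp}\ket{\psi(\tau)}} \le o(1/\poly(n)) + \sin(\beta\tau),
\end{equation*}
which is exactly the claim.

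The main obstacle I anticipate is the initial instant, where $\sin\theta=0$ makes the reduction from $\abs{\sin\theta\cos\theta\dot\theta}\le s\beta\sin\theta\cos\theta$ to $\abs{\dot\theta}\le s\beta$ formally singular. This is cosmetic rather than substantial: one may either compare with the auxiliary ODE $\dot{\tilde\theta}=s\beta,\tilde\theta(0)=0$ via a Gronwall-type argument, or bypass $\theta$ altogether and directly integrate the inequality on $\norm{\Pi_{\mathcal V^\perp}\ket\psi}^2$ from $0$ to $\tau$. A related structural point worth emphasizing is that the argument crucially needs $\mathcal V$ to be \emph{spectrally} invariant under $H_0$, i.e., to commute with $\Pi_{\mathcal V}$, not merely $H_0$-invariant in a one-sided sense; this is exactly what ``spanned by eigenstates of $H_0$'' in the hypothesis provides, and it is what kills the $H_0$ contribution in the imaginary part above.
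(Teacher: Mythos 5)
Your proof is correct, and it takes a genuinely different route from the paper's. The paper discretizes the propagator via a first-order Trotter split into alternating infinitesimal $e^{-idt(1-t/\tau)H_0}$ and $e^{-idt(t/\tau)H_1}$ steps, uses that the $H_0$ step preserves $\mathcal V$ exactly (so only the $H_1$ step leaks), bounds each infinitesimal leakage by the escape rate, and then telescopes the recursion $\cos\varphi(t+dt)\ge\cos(\varphi(t)+\beta\,dt)$ to obtain $\norm{\Pi_{\mathcal{V}^\perp}\ket{\psi(\tau)}}\le\sin(\beta\tau)$. You instead differentiate $\norm{\Pi_{\mathcal V}\ket{\psi(t)}}^2$ along the exact Schr\"odinger flow; the identity $\frac{d}{dt}\bra\psi\Pi_{\mathcal V}\ket\psi = i\bra\psi[H(t),\Pi_{\mathcal V}]\ket\psi$ makes it transparent that the $H_0$ contribution vanishes precisely because $[\Pi_{\mathcal V},H_0]=0$, i.e., because $\mathcal V$ is spanned by eigenstates of $H_0$ --- a point you correctly emphasize. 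Both approaches then reduce the leakage bound to $\norm{\Pi_{\mathcal V^\perp}H_1\Pi_{\mathcal V}}\le\beta$ and close with the identical Cauchy--Schwarz decomposition of $\abs{\braket{\alpha_1}{\psi(\tau)}}$.

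A few remarks on the comparison. Your route buys a cleaner argument with no Trotter-error bookkeeping, and it naturally retains the interpolation weight $s(t)=t/\tau$ in front of the escape-rate bound (the paper drops this, implicitly bounding $t/\tau\le1$); integrating $s(t)\beta$ over $[0,\tau]$ then gives the slightly sharper $\theta(\tau)\le\beta\tau/2$, though this improvement is absorbed into the final $\sin(\beta\tau)$ anyway. On the other hand, the paper's Trotter version, while more cumbersome, stays entirely within norm estimates of unitaries and avoids the ODE-comparison subtlety at $\theta=0$ that you flag. You are right that the singularity is cosmetic: a Dini-derivative or Gr\"onwall comparison on $f(t)=\norm{\Pi_{\mathcal V^\perp}\ket{\psi(t)}}$, using $\frac{d}{dt}f^2\le 2s\beta f$ to conclude the upper Dini derivative of $f$ is $\le s\beta$ even at $f=0$, handles it rigorously. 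One small sign slip: differentiating gives $\frac{d}{dt}\norm{\Pi_{\mathcal V}\ket\psi}^2=+2\,\mathrm{Im}\bra{\Pi_{\mathcal V}\psi}H(t)\ket\psi$ rather than $-2\,\mathrm{Im}(\cdot)$, but since you immediately take absolute values this has no effect on the estimate. Your notation $\braket{w}{H_1v}$ for the cross term would read more cleanly as $\bra w H_1\ket v$, and what you actually need is the adjoint relation $\abs{\bra v H_1\ket w}=\abs{\bra w\Pi_{\mathcal V^\perp}H_1\ket v}\le\beta$, but the logic is exactly as you state.
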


\begin{proof}
The idea of the proof is to partition the propagator of the system to infinitesimal pieces, and to bound the change of the amplitude of the system on $\mathcal V$ by each piece. Let $U(\tau)$ be the unitary matrix applied to the system by the adiabatic evolution $H(t)=(1-\frac{t}{\tau})H_{0}+\frac{t}{\tau}H_{1}$ running from $t=0$ to $t=\tau$.  The propagator $U(\tau)$ can be written as product of $r$ unitary matrices, 
\begin{equation}
\begin{split}
U(\tau)&= U_r \cdots U_1
\\
U_j&=e^{-i\frac{\tau}{r}\prnt{1-\frac{j}{r}}H_{0}}\cdot e^{-i\frac{\tau}{r} \prnt{\frac{j}{r}}H_{1}}
\end{split}
\end{equation}
where $r\rightarrow \infty$. One can write the state of the system at time $t$ as follows  (up to a global phase):
\begin{equation}
    \ket{\psi(t)}=\cos(\varphi(t)) \ket{v(t)}+ \sin(\varphi(t)) \ket{\phi(t)},
\end{equation}
wherein $v(t)$ is a state in $\mathcal V$ and $\phi(t)$ is a state in $\mathcal V^\perp$.  The angle $\varphi(t)$ grows as the amplitude of the state of the system in ${\mathcal{V}}$ diminishes. We bound the angle $\varphi(\tau)$:
\begin{equation} \label{eq:cos_theta_j}
\begin{split}
\cos(\varphi(t+dt))&= \norm{\Pi_{\mathcal{V}}\ket{\psi(t+dt)}} 
=
\norm{\Pi_{\mathcal V}e^{-idt H(t)}\ket{\psi(t)}}
=
\norm{\Pi_{\mathcal{V}} e^{-idt \prnt{1-t/\tau}H_{0}}\cdot e^{-i dt\prnt{\frac{t}{\tau}}H_{1}}\ket{\psi(t)}}
\\
&=
\norm{\Pi_{\mathcal V}e^{-id t\prnt{\frac{t}{\tau}}H_{1}}\ket{\psi(t)}}
=
 \norm{\Pi_{\mathcal{V}} e^{-id t \prnt{\frac{t}{\tau}}H_{1}} \prntt{\cos(\varphi(t)) \ket{v(t)}+ \sin(\varphi(t)) \ket{\phi(t)}}}
 \\
&\ge 
\cos(\varphi(t))\norm{\Pi_{\mathcal{V}} e^{-i dt\prnt{\frac{t}{\tau}}H_{1}}  \ket{v(t)}} -  \sin(\varphi(t)) \norm{\Pi_{\mathcal{V}} e^{-i dt\prnt{\frac{t}{\tau}}H_{1}} \ket{\phi(t)}}
\end{split}
\end{equation}
By the definition of escape rate, we get the following two inequalities:
\begin{equation}
    \norm{\Pi_{\mathcal{V}} e^{-i dt\prnt{\frac{t}{\tau}}H_{1}}\ket{v(t)}}=\sqrt{1-\norm{\Pi_{\mathcal{V^\perp}} e^{-i dt\prnt{\frac{t}{\tau}}H_{1}}\ket{v(t)}}^2} 
    \ge \sqrt{1-\prntt{\beta dt+O(dt^2)}^2} = \cos(\beta dt) + O(dt^3)
\end{equation}
\begin{equation}
    \norm{\Pi_{\mathcal{V}} e^{-i dt\prnt{\frac{t}{\tau}}H_{1}} \ket{\phi(t)}}^2
    = \sum_{v_i\in\mathcal V} \abs{\bra{v_i}e^{-i dt\prnt{\frac{t}{\tau}}H_{1}}  \ket{\phi(t)} }^2 
    \le \beta dt + O(dt^2)=\sin (\beta dt)+ O(dt^2)
\end{equation}
Substituting back to Eq. \ref{eq:cos_theta_j}, we get
\begin{equation}
    \cos(\varphi(t+dt)) \ge \cos(\varphi(t)) \cos(\beta dt) -\sin(\varphi(t)) \sin(\beta dt) = \cos(\varphi(t)+\beta t)
\end{equation}
Hence, by recursion, we get $\cos(\varphi(\tau))\ge \cos (\tau \beta)$
\begin{equation}
\begin{split}
\norm{\Pi_{\mathcal{V}} \ket{\psi(\tau)}}&=\cos(\varphi(\tau)) \ge \cos(\beta\tau)
\\
\norm{\Pi_{\mathcal{V^\perp}} \ket{\psi(\tau)}}&= \sin(\varphi(\tau))\le \sin(\beta\tau).
\end{split} 
\end{equation}
In conclusion:
\begin{equation}
\begin{split}
    \abs{\bra{\alpha_1}U(\tau)\ket{\alpha_0}} = \abs{\braket{\alpha_1}{\psi(\tau)}} 
    &=
    {\norm{\Pi_{\mathcal{V}} \ket{\psi(\tau)}}\norm{\Pi_{\mathcal{V}} \ket{\alpha_1}}+\norm{\Pi_{\mathcal{V^\perp}} \ket{\psi(\tau)}}\norm{\Pi_{\mathcal{V^\perp}} \ket{\alpha_1}}} 
    \\
    &\le \sin(\beta\tau) + o(1/\poly(n)).
\end{split}
\end{equation}

\end{proof}

\subsection{Proof of Lemma \ref{lem:stuck_qw_aqc}}



\begin{lemma} [\textbf{Gap bound by escape rate}] 
Let $H_s$ be an $n$ qubit Hamiltonian evolution s.t. $\norm{H_s}=\poly(n)$. Additionally, let ${\mathcal{V}}$ be a subspace spanned by eigenstates of $H_0$, whose projection $\Pi_{\mathcal{V}}$ satisfies $\norm{\Pi_{\mathcal{V}} \ket{\alpha_0}}=1$  and $\norm{\Pi_{\mathcal{V}}\ket{\alpha_1}}\le 1/10$.
If the escape rate of the subspace $\mathcal V$ by $H_1$ is $\beta$, then 
 $\Delta \le \sqrt[4]{{100\beta}\norm{H_0-H_1}^3}$.
\end{lemma}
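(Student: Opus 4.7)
The plan is to prove the lemma by contradiction, following the sketch given in the main text. Assume for contradiction that $\Delta > \sqrt[4]{100\beta\norm{H_0-H_1}^3}$, and evolve the interpolation Hamiltonian for the specific total time $\tau = 1/\beta$. Two mutually inconsistent estimates of $\norm{\Pi_{\mathcal V}\ket{\psi(\tau)}}$ will then collide: an upper bound coming from the adiabatic theorem, and a lower bound coming from the Trotter/escape-rate argument that proves Lemma \ref{lem:EscapeBoundsSuccess}.

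For the upper bound, I would invoke a standard Kato-type adiabatic theorem, which for the linear interpolation $H_s$ (whose second $s$-derivative vanishes) takes the clean form $\norm{\ket{\psi(\tau)}-\ket{\alpha_1}} \le C\norm{H_1-H_0}/(\tau\Delta^2)$. Substituting $\tau=1/\beta$ and rewriting the contradiction hypothesis as $\Delta^2 > 10\sqrt{\beta\norm{H_0-H_1}^3}$, the error is bounded by $(C/10)\sqrt{\beta/\norm{H_0-H_1}}$. The constant $100$ in the lemma is tuned so that this quantity is at most $1/10$ in the non-trivial regime (the complementary regime $\beta\gtrsim\norm{H_0-H_1}$ makes the claimed bound weaker than the unconditional $\Delta \le 2\norm{H_s}$, where nothing needs to be proved). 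Combined with $\norm{\Pi_{\mathcal V}\ket{\alpha_1}}\le 1/10$ and the triangle inequality, this gives $\norm{\Pi_{\mathcal V}\ket{\psi(\tau)}}\le 2/10$.

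For the lower bound, I would re-run the Trotter-partition argument from the proof of Lemma \ref{lem:EscapeBoundsSuccess} essentially verbatim. That argument uses only that $\ket{\alpha_0}\in\mathcal V$, that $\mathcal V$ is $H_0$-invariant, and that $\mathcal V$ has escape rate at most $\beta$ under $H_1$; the strong upper bound $\norm{\Pi_{\mathcal V}\ket{\alpha_1}}=o(1/\mathrm{poly}(n))$ used at the last step there is not needed here. Since $\beta\tau = 1 < \pi/2$, the argument yields $\norm{\Pi_{\mathcal V}\ket{\psi(\tau)}} \ge \cos(\beta\tau) = \cos(1) \approx 0.54$. Because $0.54>2/10$, this contradicts the upper bound and thus refutes the contradiction hypothesis.

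The main obstacle is purely quantitative bookkeeping: the fourth-root form and the specific constant $100$ are dictated by the simultaneous demands that the adiabatic error at $\tau=1/\beta$ be at most $1/10$ and that $\cos(\beta\tau)$ stay comfortably above $1/10+1/10$. The time $\tau=1/\beta$ is the natural point where both demands meet at once; any other choice of $\tau$ would shift the constants but not the underlying scaling $\Delta^4 \sim \beta\norm{H_0-H_1}^3$. A secondary concern is picking the right form of the adiabatic theorem — but because $H_s$ is linear in $s$ the second-order derivative terms vanish, so the simplest Kato estimate suffices and no more elaborate Jansen--Ruskai--Seiler version is required.
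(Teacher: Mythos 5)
Your high-level strategy matches the paper's exactly: argue by contradiction, fix $\tau = 1/\beta$, combine a lower bound on $\norm{\Pi_{\mathcal V}\ket{\psi(\tau)}}$ from the escape-rate/Trotter argument of Lemma~\ref{lem:EscapeBoundsSuccess} with an upper bound from the adiabatic theorem. The point of divergence, and the place where your argument has a genuine gap, is the choice of adiabatic estimate.

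You invoke a Kato-type bound with $\tau^{-1}$ scaling, $\eta \lesssim \norm{H_0-H_1}/(\tau\Delta^2)$, and observe correctly that at $\tau=1/\beta$ under the contradiction hypothesis $\Delta^2 > 10\sqrt{\beta\norm{H_0-H_1}^3}$ this yields $\eta \lesssim \sqrt{\beta/\norm{H_0-H_1}}$. That is not unconditionally small: you need $\beta \lesssim \norm{H_0-H_1}$ with a specific constant, and you propose to dismiss the complementary regime on the grounds that there the claimed bound is ``weaker than the unconditional $\Delta \le 2\norm{H_s}$.'' That dismissal is incorrect. In the complementary regime the claimed bound is $\Delta \lesssim \norm{H_0-H_1}$, whereas $\norm{H_s}$ can be vastly larger than $\norm{H_0-H_1}$ (take $H_0, H_1$ both of large norm but close to each other). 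So $\Delta \le 2\norm{H_s}$ does not imply the lemma there. What actually closes this case is a separate ground-state stability argument: since $\mathcal V$ is $H_0$-invariant and $\norm{\Pi_{\mathcal V}\ket{\alpha_1}}\le 1/10$, the ground state moves by $\Omega(1)$ along the path, and $\norm{\alpha_1-\alpha_0}\le \norm{H_0-H_1}/\Delta$ forces $\Delta = O(\norm{H_0-H_1})$; together with the always-true bound $\beta\le\norm{H_0-H_1}$ (which follows because $\Pi_{\mathcal V^\perp}H_0\Pi_{\mathcal V}=0$), this does finish the complementary regime. But you need to spell that out; the bound you cited does not do the job.

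The paper sidesteps the whole case split by using the Ambainis--Regev form of the adiabatic theorem, $\tau \ge \frac{10^4}{\eta^2}\cdot\frac{\norm{H'}^3}{\Delta^4}$, which has $\tau^{-1/2}$ scaling ($\eta \sim \norm{H'}^{3/2}/(\Delta^2\sqrt{\tau})$). Plugging $\tau=1/\beta$ into that estimate under the contradiction hypothesis $\Delta^4 \gtrsim \beta\norm{H_0-H_1}^3$ makes the $\beta$ and $\norm{H_0-H_1}$ factors cancel completely, leaving a pure numerical constant for $\eta$ with no residual regime. That is not an accident: the exponents in $\sqrt[4]{\beta\norm{H_0-H_1}^3}$ are precisely the ones dictated by the Ambainis--Regev $\norm{H'}^3/\Delta^4$ scaling. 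So your remark that the ``simplest Kato estimate suffices and no more elaborate \ldots\ version is required'' has it backwards in one respect: the weaker $\tau^{-1/2}$ adiabatic theorem is the one that is tailored to this lemma's statement, and using the stronger $\tau^{-1}$ estimate buys you a tighter bound in one regime at the cost of having to argue the other regime separately. Either route can be made rigorous, but your write-up as stands does not correctly handle the second regime.
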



\begin{proof}
The idea of the proof is to assume by contradiction that for a specified $\tau$, the gap is large enough so that $\eta=1/10$ distance between $\alpha_1$ and the final state is reach. On the other hand, the escape rate bounds the rate by which the system leaves ${\mathcal{V}}$ (spanned by $\alpha_0$)  or, conversely, the rate by which the system reaches $\alpha_1$. Hence, if the escape rate is small enough so that the final state is not reached fast enough, the gap must have been small. We use the following version of the adiabatic theorem:
\begin{theorem} [Adiabatic theorem, adapted from \cite{AR04}] \label{thm:AdiabaticTheorem} 
Let $H_s=H_0 (1-s)+H_1s$, be a time dependent Hamiltonian $(0\le s\le 1)$, and let $\Delta$ be the minimal spectral gap between the  ground state and the excited states of $H_s$ for every $s$. Starting with the ground state of $H_0$, consider the adiabatic evolution given by $H_s$ applied for time $\tau$. Then, the following condition is enough to guarantee that the final state is at $\ell_2$ distance at most $\eta$ from the ground state of $H_1$:
\begin{equation}
\tau\ge \frac{10^4}{\eta^2} \cdot \max \prnttt{\frac{\norm{H'}^3}{\Delta^4},\frac{\norm{H'}\cdot\norm{H''}}{\Delta^3}},
\end{equation}
where prime denotes a derivative by $s$. 
\end{theorem}
By the adiabatic theorem, for a given $\tau$, if the gap is lower bounded as follows, then $\eta>1/10$:
\begin{equation} \label{eq:gapBound}
\Delta \ge \sqrt[4]{\frac{100\norm{H'}^3}{\tau}} =  \sqrt[4]{\frac{100\norm{H_0-H_1}^3}{\tau}}
\end{equation}

By Lemma \ref{lem:EscapeBoundsSuccess}, for $\beta<\pi/2\tau$, the projection of the final state on $\alpha_1$ is bounded by $\sin (\beta \tau)+o(1/\poly(n))$. Hence, using the same notations from the proof of Lemma \ref{lem:EscapeBoundsSuccess}, we get:
\begin{equation}
\begin{split}
1&\le \abs{\bra{\alpha_1}U(\tau)\ket{\alpha_0}} + \eta=\abs{\braket{\alpha_1}{\psi(\tau)}}+\eta=\abs{\cos(\varphi(\tau))\braket{\alpha_1}{v(\tau)}+\sin(\varphi(\tau))\braket{\alpha_1}{\phi(\tau)}}+\eta 
\\
&\le 0.1 + \sin(\beta\tau) + 0.1 + o(1/\poly(n)).
\end{split}
\end{equation}
By choosing $\tau=1/\beta$ we  get a contradiction - the system did not reach distance $\eta=1/10$ from $\alpha_1$. Hence, the gap lower bound in Eq. \ref{eq:gapBound} is incorrect and the proof follows.
\end{proof}

\subsection{Proof of Lemma \ref{lem:pert}}

\begin{lemma} [\textbf{Path shift by Hamiltonian perturbation}] 
Consider two time-dependent Hamiltonians $H(t), \widetilde H(t)$. Let $U(t)$ and $\widetilde U(t)$ be the respective evolution operators by these Hamiltonians  (e.g., when evolving by $H$, $\ket{\psi(t)}=U(t)\ket{\psi(0)}$). Then,
\begin{equation} \label{eq:pertHSM}
\norm{(U(\tau)-\widetilde U(\tau))\ket{\psi(0)}} \le \intop_0^\tau dt \norm{\prnt{H(t)-\widetilde H(t)}\ket{\psi(t)}}.
\end{equation}
\end{lemma}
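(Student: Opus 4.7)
The plan is to reduce the norm of the operator difference $\bigl(U(\tau)-\widetilde U(\tau)\bigr)\ket{\psi(0)}$ to an integral of norms by the fundamental theorem of calculus applied to a suitably chosen auxiliary operator, and then use unitarity to pull $\ket{\psi(t)} = U(t)\ket{\psi(0)}$ out of the integrand. The natural auxiliary is the product $V(t) \triangleq \widetilde U^\dagger(t)\, U(t)$, which interpolates between $V(0)=\mathbbm{1}$ and $V(\tau) = \widetilde U^\dagger(\tau) U(\tau)$. Applying the Schr\"odinger equations $\frac{d}{dt} U(t) = -iH(t)U(t)$ and $\frac{d}{dt} \widetilde U(t) = -i\widetilde H(t)\widetilde U(t)$ (so $\frac{d}{dt}\widetilde U^\dagger(t) = i\widetilde U^\dagger(t)\widetilde H(t)$), the product rule yields
\begin{equation}
\frac{d}{dt} V(t) = -i\,\widetilde U^\dagger(t)\bigl(H(t)-\widetilde H(t)\bigr)U(t).
\end{equation}

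Integrating from $0$ to $\tau$ gives $V(\tau)-\mathbbm{1} = -i\int_0^\tau \widetilde U^\dagger(t)(H(t)-\widetilde H(t))U(t)\,dt$. Multiplying on the left by $\widetilde U(\tau)$ rewrites this as
\begin{equation}
U(\tau)-\widetilde U(\tau) \;=\; -i\,\widetilde U(\tau)\intop_0^\tau \widetilde U^\dagger(t)\bigl(H(t)-\widetilde H(t)\bigr)U(t)\,dt.
\end{equation}
Now I would apply both sides to $\ket{\psi(0)}$, use $U(t)\ket{\psi(0)}=\ket{\psi(t)}$ inside the integrand, take norms, and pass the norm under the integral via the triangle inequality. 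Since $\widetilde U(\tau)$ and $\widetilde U^\dagger(t)$ are unitary they preserve the norm, so they drop out and one obtains exactly the claimed bound
\begin{equation}
\norm{(U(\tau)-\widetilde U(\tau))\ket{\psi(0)}} \;\le\; \intop_0^\tau dt\,\norm{(H(t)-\widetilde H(t))\ket{\psi(t)}}.
\end{equation}

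There is no serious obstacle: the derivation is entirely mechanical once the auxiliary $V(t)$ is chosen. The only subtle point is making sure the $\ket{\psi(t)}$ that appears in the bound is the unperturbed trajectory $U(t)\ket{\psi(0)}$ and not the perturbed one $\widetilde U(t)\ket{\psi(0)}$; this is precisely why I use $V(t) = \widetilde U^\dagger(t)U(t)$ rather than $U^\dagger(t)\widetilde U(t)$, so that $U(t)$ (and hence $\ket{\psi(t)}$) sits on the right of the integrand. All other steps are standard triangle inequality plus unitary invariance of the operator norm.
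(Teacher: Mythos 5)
Your proof is correct. Your route is a continuous Duhamel/variation-of-parameters argument: you differentiate the interlaced propagator $V(t)=\widetilde U^\dagger(t)U(t)$, integrate, and invoke unitary invariance of the norm. The paper instead follows Ambainis--Regev and discretizes time into $r$ slices, writes each slice's propagator as the unperturbed one plus a small error $E_j=\widetilde U_j-U_j$, telescopes, and takes $r\to\infty$; the resulting Riemann sum is precisely your Duhamel integral. The two proofs are thus the discrete and continuous incarnations of the same identity. Yours is shorter and arguably cleaner, at the cost of implicitly assuming enough regularity of $H(t),\widetilde H(t)$ for the Schr\"odinger equations to hold pointwise and for the differentiation of $V(t)$ to be legitimate; the paper's discretization is more elementary and sidesteps those regularity concerns by construction. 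Your observation about the choice $V=\widetilde U^\dagger U$ versus $U^\dagger\widetilde U$ is exactly the right thing to flag: it is what guarantees the bound involves the unperturbed trajectory $\ket{\psi(t)}=U(t)\ket{\psi(0)}$, matching the paper's statement.
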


\begin{proof}
Following Ambainis and Regev~\cite{AR04}, let $s$ be discretized to $r$ intervals  (later we'll take $r$ to $\infty$).  The unitary operation of the Hamiltonian evolution is discretized too: $U_j=\exp{\prnt{-i\frac{\tau}{r}H_{j/r}}}$ and $\widetilde{U}_j=\exp{\prnt{-i\frac{\tau}{r}\widetilde{H}_{j/r}}}$.

Let $\ket{\psi_j}=U_{j-1}\cdots U_0\ket{\psi(0)}$, and let $E_j=\widetilde{U}_j-U_j$. The distance between the two evolutions is the following (up to an $O(1/r)$ error):
\begin{equation} \label{eq:HtildeFinalState2} 
\begin{split}
&\norm{\ket{\psi_r} - \widetilde{U}_{r-1}\cdot \widetilde{U}_{r-2} \cdots \widetilde{U}_0 \ket{\psi_0}}
=
\norm{\ket{\psi_r} - \widetilde{U}_{r-1}\cdot \widetilde{U}_{r-2} \cdots \widetilde{U}_1 (U_0+E_0) \ket{\psi_0}}
\\
&=
\norm{\ket{\psi_r}- \widetilde{U}_{r-1}\cdot \widetilde{U}_{r-2} \cdots \widetilde{U}_1 (\ket{\psi_1}+E_0 \ket{\psi_0})}
=
\norm{\ket{\psi_r} - U_{r-1}\cdot U_{r-2} \cdots U_0 \ket{\psi_0}+ \sum_{j=0}^{r-1} \prnt{\widetilde{U}_{r-1}\cdots \widetilde{U}_{j+1}}E_j \ket{\psi_j}}
\\
&=
\norm{\sum_{j=0}^{r-1} \prnt{\widetilde{U}_{r-1}\cdots \widetilde{U}_{j+1}}E_j \ket{\psi_j}}
\le {\sum_{j=0}^{r-1} \norm{E_j\ket{\psi_j}}}  
\end{split}
\end{equation}
Approximating for $r\rightarrow\infty$:
\begin{equation} \label{eq:EjSM}
\begin{split}
{\sum_{j=0}^{r-1} \norm{E_j\ket{\psi_j}}}=& \sum_{j=0}^{r-1} \norm{{\prntt{{e^{-i\frac{\tau}{r}\widetilde{H}_{j/r}}}-e^{-i\frac{\tau}{r}H_{j/r}}}\ket{\psi_j}}}= \frac{\tau}{r} \sum_{j=0}^{r-1} \norm{{\prnt{H_{j/r}-\widetilde{H}_{j/r}}\ket{\psi_j}}}
\\
&=\intop_0^\tau dt \norm{\prnt{H(t)-\widetilde H(t)}\ket{\psi(t)}} 
\end{split}
\end{equation}
\end{proof}
\subsection{Proof of Theorem \ref{thm:main}}

\setcounter{theorem}{0}
\begin{theorem} [\textbf{Main: Phase transition}] ~ \\
\textbf{a.} Let $\theta_\ell\triangleq \frac{n}{2}-\sqrt{n\log^c n }$ for some constant $c>1$. If $\theta\le \theta_\ell$, then polynomial time adiabatic evolution by $H^{\theta}$ fails.  Furthermore, the minimal gap in this case is $o(1/\poly(n))$ small.\\
\textbf{b.} Let $\theta_h \triangleq n/2+\sqrt{40\log n}$. Evolving by  $H^{\theta}$, wherein $\theta \ge \theta_h$  succeeds  for  $\tau=n^4$.
\end{theorem}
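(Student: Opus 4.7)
The plan is to prove parts (a) and (b) separately, each by instantiating the appropriate lemma from the main text on the $\theta$-family.

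For part (a), I would take $\mathcal{V}$ to be the one-dimensional subspace spanned by the initial ground state $\alpha_0^\theta = \ket{+\dots +}$. The hypothesis of both Lemma~\ref{lem:EscapeBoundsSuccess} and Lemma~\ref{lem:stuck_qw_aqc} on $\Pi_{\mathcal V}\alpha_1$ is satisfied, since $|\braket{+\dots+}{0\dots 0}| = 2^{-n/2}$. The core computation is to bound the escape rate of $\mathcal V$ by $H_1^{\theta}$, which for a one-dimensional $\mathcal V$ coincides with the energy uncertainty $\mathrm{std}_{\alpha_0^\theta} H_1^\theta$ (Eq.~\ref{eq:DeltaE}). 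Because $H_1^\theta$ is diagonal in the computational basis with eigenvalues $h_\theta(j)=\min(\theta,h(j))$ and $|\alpha_0^\theta\rangle$ is the uniform superposition, this reduces to moments of a truncated binomial. Carrying out the algebra in Eq.~\ref{eq:escaperatechernoff} and invoking the Chernoff bound on $\sum_{k<\theta}\binom{n}{k}$ gives an escape rate of $\sqrt{2}\,\theta\,\exp(-(n/2-\theta)^2/(2n))$. For $\theta\le \theta_\ell=n/2-\sqrt{n\log^c n}$ with $c>1$, this is $o(1/\poly(n))$. Lemma~\ref{lem:EscapeBoundsSuccess} then immediately yields failure of any polynomial-time adiabatic evolution, and Lemma~\ref{lem:stuck_qw_aqc} (with $\norm{H_0-H_1}=O(n)$) gives the super-polynomially small gap.

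For part (b), the plan is to compare the evolution generated by $H^n_s$ (which has constant gap, as shown in the opening claims) with the one generated by $H^{\theta_h}_s$ via Lemma~\ref{lem:pert}. Let $\delta H_s=H^n_s-H^{\theta_h}_s$; by construction, the image of $\delta H_s$ is supported on the subspace $\mathcal W=\mathrm{span}\{\ket j,\ket{j_+}:h(j)>\theta_h\}$, and $\norm{\delta H_s}\le 2n$. By the standard adiabatic theorem applied to $H^n_s$ with $\tau=n^4$, the evolved state $\psi^n(s\tau)$ sits within $o(1)$ of the instantaneous ground state $\alpha^n_s$, which factorizes as a single-qubit tensor power (Eq.~\ref{eq:ThetaNgs}). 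The projection $\norm{\Pi_{\mathcal W}\alpha^n_s}^2$ therefore equals a binomial tail $\Pr[X\ge \theta_h]$ with $p=\sin^2\varphi_s\le 1/2$; Chernoff (Eq.~\ref{eq:projectionWChernoff}) gives $O(n^{-18})$ uniformly in $s$. Plugging into Lemma~\ref{lem:pert}, the integrand is bounded by $2n\cdot O(n^{-9})=O(n^{-8})$, so the total path displacement between the $H^n$-evolution and the $H^{\theta_h}$-evolution is $\tau\cdot O(n^{-8})=O(n^{-4})$. Combined with the $o(1)$ success of $H^n$ itself, a triangle inequality gives final overlap $1-o(1)$ with $\ket{0\dots 0}=\alpha^n_1=\alpha^{\theta_h}_1$.

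The main technical obstacle lies in part (b): Lemma~\ref{lem:pert} demands the integrand $\norm{\delta H_s \ket{\psi^n(s\tau)}}$ along the \emph{actual} evolved path, not along the instantaneous ground state $\alpha^n_s$. Since the adiabatic theorem only places $\psi^n$ at distance $o(1)$ from $\alpha^n_s$, naively replacing one with the other costs an extra $\tau\cdot \norm{\delta H_s}\cdot o(1)=n^4\cdot 2n\cdot o(1)$, which is not small. The workaround is to pick $\tau=n^4$ large enough that the adiabatic error is actually $O(n^{-K})$ for some sufficiently large $K$ (the theorem statement controls $\norm{H'}^3/\Delta^4$, and here $\Delta=\Omega(1)$ while $\norm{H'}=O(n)$, so one can push the adiabatic distance to any inverse polynomial by enlarging $\tau$ slightly). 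Alternatively, one can bootstrap: first prove the bound on $\norm{\Pi_{\mathcal W}\psi^n(s\tau)}$ by a Grönwall-type argument using the already-established bound on $\norm{\Pi_{\mathcal W}\alpha^n_s}$ plus the $o(1)$ adiabatic drift, which is the route I would adopt to keep the constants clean.
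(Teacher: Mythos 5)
Part (a) of your proposal reproduces the paper's proof essentially verbatim: $\mathcal V = \mathrm{span}\{\alpha_0^\theta\}$, the $2^{-n/2}$ overlap condition, the escape-rate computation via Chernoff yielding $\beta \le \sqrt 2\,\theta\, e^{-(n/2-\theta)^2/(2n)}$, and then Lemmas~\ref{lem:EscapeBoundsSuccess} and~\ref{lem:stuck_qw_aqc}. No issues there.

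For part (b) you correctly diagnose the real obstacle — Lemma~\ref{lem:pert} needs $\norm{\delta H_s\ket{\psi^n(s\tau)}}$ along the \emph{evolved} state, not along $\alpha^n_s$ — but neither of your two proposed fixes actually closes the gap, and you miss the mechanism the paper uses. Increasing $\tau$ fails: the adiabatic error from Theorem~\ref{thm:AdiabaticTheorem} scales as $\eta = O(\sqrt{\norm{H'}^3/(\tau\Delta^4)}) = O(\sqrt{n^3/\tau})$, so the naive triangle-inequality contribution $\tau\cdot\norm{\delta H_s}\cdot\eta = O(n\cdot\sqrt{\tau n^3}) = O(n^{2.5}\sqrt\tau)$ only \emph{grows} as you enlarge $\tau$; you cannot make it $o(1)$ for any polynomial $\tau$. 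The Gr\"onwall-type bootstrap, as stated, gives nothing better: $\norm{\Pi_{\mathcal W}\psi^n} \le \norm{\Pi_{\mathcal W}\alpha^n_s} + \norm{\psi^n - \alpha^n_s} = O(n^{-9}) + O(n^{-1/2})$, dominated by the second term, and $\tau\cdot 2n\cdot O(n^{-1/2}) = O(n^{4.5})$ is not small — a priori the $o(1)$ adiabatic drift could indeed concentrate entirely in $\mathcal W$. What actually rescues the argument is a structural fact you do not use: $H^n_s$ is a \emph{tensor sum} of single-qubit Hamiltonians, so $U^n(s\tau)$ factors as a tensor product of single-qubit unitaries, and $\psi^n(s\tau)$ is an \emph{exact} tensor power. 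Applying the adiabatic theorem qubit-by-qubit (single-qubit gap $\Omega(1)$, $\norm{H'}=O(1)$, $\tau = n^4$) puts each factor within $O(\tau^{-1/2}) = O(n^{-2})$ of the single-qubit ground state, so $\psi^n(s\tau) = \bigl(\cos\varphi_s'\ket 0 + e^{i\chi_s}\sin\varphi_s'\ket 1\bigr)^{\otimes n}$ with $|\varphi_s'-\varphi_s| = O(n^{-2})$. The binomial-tail estimate then applies directly to $\psi^n(s\tau)$ itself, with Bernoulli parameter $\sin^2\varphi_s' \le 1/2 + O(n^{-2})$, and Chernoff still gives the $O(n^{-9})$ projection onto $\mathcal W$ for the \emph{evolved} state. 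This tensor-product preservation is the missing idea; without it the last step of part (b) does not go through.

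(One unrelated remark: there is an internal inconsistency in the paper between $\theta_h = n/2 + \sqrt{40\log n}$ in the theorem statement and $\theta_h = n/2 + \sqrt{40\,n\log n}$ in the supplemental proof; the Chernoff exponent $e^{-2\varepsilon^2/(3n)}$ only yields $O(n^{-\mathrm{const}})$ for the latter, so the supplemental value is the correct one.)
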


The Chernoff bound is used in the both proofs:
let  $X_i$ be independent Bernoulli variables s.t. $\sum_i \mathbb{E} X_i= \mu$. Then,
\begin{gather}
\Pr\prnt{\sum_{i=1}^n X_i \ge (1+\delta)\mu}\le e^{-\frac{\delta^2\mu}{3}} ~~~~~~~~~ \delta \ge 0 \label{eq:Chernoff1}
\\
\Pr\prnt{\sum_{i=1}^n X_i \le (1-\delta)\mu}\le e^{-\frac{\delta^2\mu}{2}} ~~~~ 0\le \delta  \le 1 \label{eq:Chernoff2}
\end{gather}

 The following inequality is a derived from Eq. \ref{eq:Chernoff2}, with $\mu=n/2, \delta = 1-\theta/\mu$: 
\begin{equation}
2^{-n}\sum_{k=0}^{\theta-1} {n \choose k} \le e^{-\frac{(n/2-\theta)^2}{n}} ~~~~\theta\le n/2
\end{equation}

\begin{proof} (Theorem \ref{thm:main}a) ~\\
The proof relies on Lemma  \ref{lem:EscapeBoundsSuccess} and Lemma \ref{lem:stuck_qw_aqc}. We choose ${\mathcal{V}}$ to be the subspace spanned by the initial ground state $\alpha^{\theta}_0$ alone. The projection of $\alpha^{\theta}_1$ on $\mathcal V$ is $2^{-n/2}$, as required by both Lemmas.
We are left with the task of proving that the escape rate of $\mathcal V$ by  $H^{\theta}_1$ is $o(1/\poly(n))$ small.

Calculating the escape rate:
\begin{equation}
\begin{split}
    \norm{\Pi_{\mathcal{V^\perp}} H^{\theta}_1 \ket{\alpha^{\theta}_0}}^2 
    =
    \norm{H^{\theta}_1 \ket{\alpha^{\theta}_0}}^2-\norm{\Pi_{\mathcal{V}} H^{\theta}_1 \ket{\alpha^{\theta}_0}}^2
\end{split}
\end{equation}
The first addend of the RHS is:
\begin{equation}
    \norm{H^{\theta}_1 \ket{\alpha^{\theta}_0}}^2= \abs{\bra{0\dots 0}(H^{\theta}_1)^2 \ket{0\dots 0}} = \frac{1}{2^n}{\sum_{j=0}^{2^n-1} h^2_\theta(j)} \le \theta^2.
\end{equation}
The second addend is:
\begin{equation} 
\begin{split}
\norm{\Pi_{\mathcal{V}} H^{\theta}_1 \ket{\alpha^{\theta}_0}}
&=
\abs{\bra{0\dots 0}H^{\theta}_1 \ket{0\dots 0}}
=
\abs{\sum_{j=0}^{2^n-1} h_\theta(j) \abs{\braket{0\dots 0}{j_+}}^2} = \frac{1}{2^n}{\sum_{j=0}^{2^n-1} h_\theta(j)} = \frac{1}{2^n}{\sum_{k=0}^{n} {n \choose k} 
\min(k,\theta)} 
\\
&\ge
\theta - \frac{1}{2^n}\sum_{k<\theta} {n\choose k} (\theta-k) \ge \theta - \theta e^{-\frac{(n/2-\theta)^2}{n}}.
\end{split}
\end{equation}
Hence, 
\begin{equation}
\norm{\Pi_{\mathcal{V^\perp}} H^{\theta}_1 \ket{\alpha^{\theta}_0}}^2 \le \theta^2 - \prnt{\theta - \theta e^{-\frac{(n/2-\theta)^2}{n}}}^2 \le 2\theta^2  e^{-\frac{(n/2-\theta)^2}{n}}
\end{equation}
By choosing  $\theta \le n/2-\sqrt{n \log^c n}$ with $c>1$ we get that the escape rate is super-polynomially small and by Lemmas \ref{lem:EscapeBoundsSuccess},\ref{lem:stuck_qw_aqc}, polynomial time evolution fails, and the gap is o(1/\poly(n)).

\end{proof}
%

\begin{proof} (Theorem \ref{thm:main}b) \\
\noindent The idea of the proof is that evolving for polynomial time by  $H^{n}$ and by $H^{\theta}$, wherein $\theta\le  n/2 + \sqrt{40n\log n}$, yield two states with inverse polynomial  distance from each other.  Let  $U^{n}(\tau), U^{\theta}(\tau)$ be  the  evolution operators by $H^{n},H^{\theta}$ respectively.  The initial common ground state is $\alpha^{n}_0$. By the adiabatic theorem (Theorem \ref{thm:AdiabaticTheorem}) if $\tau=n^4$ the final state of the evolution by $H_s$ is $\alpha^{n}_1$ with additive error $o(1)$.

%
%
%

By Lemma \ref{lem:pert},
\begin{equation} 
\norm{(U^{n}(\tau)- U^{\theta}(\tau))\ket{\psi(0)}} \le \intop_0^\tau dt \norm{\prnt{H^{n}(t)- H^{\theta}(t)}\ket{\psi(t)}} = \tau \intop_0^1 ds \norm{\prnt{H^{n}_s-H^{\theta}_s}\ket{\psi_s}}
\end{equation}
where $\ket{\psi_s}=U^{n}(s\tau)\ket{\psi(0)}$. In  the rest of the proof we bound the integrand.
\begin{equation}
\label{eq:AQCerr}
\begin{split}
&\norm{\prnt{H^{n}_s-{H}^{\theta}_s}\ket{\psi_s}}=\\
&\norm{\prnt{1-s}\prnt{\sum_{k=0}^{2^n-1}\prnt{h(k)-{h}_\theta(k)}\ket{k}\bra{k} }\ket{\psi_s}+s\prnt{\sum_{k=0}^{2^n-1}\prnt{h(k)-{h}_\theta(k)}\ket{k_+}\bra{k_+} }\ket{\psi_s}}\le \\
&\norm{\sum_{k=0}^{2^n-1}\prnt{h(k)-{h}_\theta(k)}\ket{k}\braket{k} {\psi_s}}+\norm{\sum_{k=0}^{2^n-1}\prnt{h(k)-{h}_\theta(k)}\ket{k_+}\braket{k_+}{\psi_s}}.
\end{split}
\end{equation}

If the total evolution time $\tau$ is long enough, $\psi_s$ is not so far from $\alpha^{n}_s$, which is a tensor product and can be expressed in the following two forms:
\begin{equation}
\ket{\alpha^{n}_s}=(\cos(\varphi_s)\ket{0}+\sin(\varphi_s)\ket{1})^{\otimes n} =\prnt{\cos(\xi_s)\ket{+}+\sin(\xi_s)\ket{-}}^{\otimes n} 
\end{equation}
where $\varphi_s\in[0,\pi/4]$ and $\xi_s=\pi/4-\varphi_s\in[0,\pi/4]$. 

We write  $\psi_s$ as a similar tensor product by applying the adiabatic theorem (Theorem \ref{thm:AdiabaticTheorem}) on each qubit separately. The qubits evolves separately by a 1-qubit Hamiltonian with a constant minimal gap. Additionally, $\tau=n^4$, and $\norm{dH^{n}/ds}\le 2n$,  hence the distance for each 1-qubit final state from the ground state is $\eta =O(\tau^{-1/2})$. We get
\begin{equation}
\begin{split}
\ket{\psi_s}&=\prntt{\cos(\varphi_s)\ket{0}+\sin(\varphi_s)\ket{1}+O(1/\sqrt{\tau})}^{\otimes n}
\\
&= (\cos(\varphi_s')\ket{0}+e^{i\chi_s}\sin(\varphi_s')\ket{1})^{\otimes n}
=(\cos(\xi_s')\ket{+}+e^{i\kappa_s}\sin(\xi_s')\ket{-})^{\otimes n}
\end{split}
\end{equation}
where $\abs{\varphi_s-\varphi_s'}=O(1/\sqrt{\tau})$ and $\abs{\xi_s-\xi_s'}=O(1/\sqrt{\tau})$.
The square of first addend in the RHS of Eq. \ref{eq:AQCerr} takes the form: 

\begin{equation}
\begin{split}
\norm{\sum_{k=0}^{2^n-1}\prnt{h(k)-{h}_\theta(k)}\ket{k}\braket{k}{\psi_s}}^2
&\le 
\sum_{k=0}^{2^n-1}\prnt{h(k)-{h}_\theta(k)}^2\cos^{2n-2h(k)}(\varphi_s')\sin^{2h(k)}(\varphi_s')
\\
&\le n^2 \cdot 
 \sum_{k=\theta}^n  \binom{n}{k}\cos^{2n-2k}(\varphi_s')\sin^{2k}(\varphi_s'). 
\end{split}
\end{equation}
Note that in the last inequality $k$ is the Hamming weight. The second addend in  Eq.\ref{eq:AQCerr} is bounded by a similar expression by using $\xi_s$ instead of $\varphi_s$. Let  $p_s\triangleq \sin^2(\varphi_s')<1/\sqrt{2}+O(1/\sqrt{\tau})$. The sum is identical to the probability that $\theta$ or more out of $n$ unbiased coins will be head, when the probability for head is $p_s$. We will the Chernoff bound (Eq. \ref{eq:Chernoff1}) with $\delta=\frac{\theta}{\mu}-1$, $\mu=p_s n$ and $\theta=n/2 + \varepsilon$:

\begin{equation}
\begin{split}
\sum_{k=\theta}^n  \binom{n}{k}(1-p_s)^{n-k}p_s^{k} 
&\le
\exp\prnttt{-\prnt{\frac{\theta}{p_s n}-1}^2 \cdot \frac{p_s n}{3}} 
= \exp\prnttt{-\prnt{\frac{1}{2p_s}+\frac{\varepsilon}{np_s}-1}^2 \cdot \frac{p_sn}{3}} 
\\
&\le  \exp\prnttt{-\prnt{\frac{1}{1+O(1/\tau)}+\frac{2\varepsilon}{n}-1}^2 \cdot \frac{(\frac{1}{2}+O(1/\sqrt \tau)n)}{3}} 
= e^{-\frac{2\varepsilon^2}{3n} - O(\varepsilon/\sqrt{\tau}+n/\tau)}
\end{split}
\end{equation}
The sum is maximized for $p_s=1/2+O(1/\sqrt{\tau})$, hence the last inequality.  Substituting everything back to Lemma \ref{lem:pert}, we get: 
\begin{equation}
\norm{(U^{n}(\tau)- U^{\theta}(\tau))\ket{0...0}}  \le \tau \intop_0^1 ds \norm{{\prnt{H^{n}_s-{H}^{\theta}_s}\ket{\psi_s}}} \le 2 n \tau  e^{-\frac{\varepsilon^2}{3n} - O(\varepsilon/\sqrt{\tau}+n/\tau) } = 2n\tau O(n^{-9})
\end{equation}
For $\theta\le n/2+\varepsilon= n/2 + \sqrt{40n\log n}$, and $\tau=n^4$, the difference between the final state when evolving by $H^{\theta}_s$ and $\psi_1$ is $O(n^{-4})$ small. By the adiabatic theorem, the distance between $\psi_s$ and $\ket{0...0}$ is $o(1)$, and the proof follows.


%
%
%
%
%
%
\end{proof}

\subsection{Proof of Fact \ref{fact:LowInfluenceSubBasis}}
\setcounter{fact}{0}

\begin{fact} 
Let $\prnttt{w_i}$ be a basis to a Hilbert space $\mathcal{H}$, and let ${\mathcal{X}}$ be a subspace in $\mathcal{H}$. For any $d\le\dim{\mathcal{H}}$ there exists a subspace ${\mathcal{W}}$ spanned by $d$ basis vectors of $\prnttt{w_i}$ s.t.  $\norm{\Pi_{\mathcal{W}} \Pi_{\mathcal{X}}}^2\le d \frac{\dim {\mathcal{X}} }{\dim {\mathcal H}}$.
\end{fact}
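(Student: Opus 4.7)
The plan is to exploit two standard facts: (i) the sum $\sum_i \|\Pi_{\mathcal{X}} \ket{w_i}\|^2$ equals $\dim \mathcal{X}$ because it is $\tr(\Pi_{\mathcal{X}})$, so the average of these contributions equals $\dim\mathcal{X}/\dim\mathcal{H}$, and (ii) the operator norm is dominated by the Frobenius norm, which for a projector-projector product becomes a sum of the contributions $\|\Pi_{\mathcal{X}} \ket{w_i}\|^2$ over the chosen basis vectors. Combining them via a pigeonhole/averaging argument on which $d$ basis vectors to keep yields the bound.

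Concretely, first I would compute $\tr(\Pi_{\mathcal{X}})=\dim\mathcal{X}$ in the basis $\{w_i\}$ to get
\begin{equation}
\sum_{i=1}^{\dim\mathcal{H}} \|\Pi_{\mathcal{X}} \ket{w_i}\|^2 = \dim\mathcal{X}.
\end{equation}
Then I would pick $\mathcal{W}$ to be the span of the $d$ basis vectors with the smallest values of $\|\Pi_{\mathcal{X}} \ket{w_i}\|^2$. By a trivial averaging argument, the sum of the $d$ smallest values is at most $d$ times the mean, i.e., at most $d\,\dim\mathcal{X}/\dim\mathcal{H}$.

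Next I would translate this into a bound on $\|\Pi_{\mathcal{W}} \Pi_{\mathcal{X}}\|^2$. Using $\|A\|^2 \le \|A\|_F^2$ and expanding the Frobenius norm in the orthonormal basis spanning $\mathcal{W}$,
\begin{equation}
\|\Pi_{\mathcal{W}}\Pi_{\mathcal{X}}\|^2 \le \|\Pi_{\mathcal{W}}\Pi_{\mathcal{X}}\|_F^2 = \tr(\Pi_{\mathcal{W}}\Pi_{\mathcal{X}}\Pi_{\mathcal{W}}) = \sum_{w_i\in\mathcal{W}} \|\Pi_{\mathcal{X}}\ket{w_i}\|^2 \le d\,\frac{\dim\mathcal{X}}{\dim\mathcal{H}},
\end{equation}
which is exactly the claim.

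There is no real obstacle here; the only subtlety is being careful that the operator norm rather than the Frobenius norm appears on the left-hand side of the statement, which is why one has to pay with the $\|\cdot\|\le\|\cdot\|_F$ inequality. This is lossy in general, but it is harmless for the intended use in Lemma \ref{lem:robustness}, where only the existence of such a subspace of dimension $d\ll \dim\mathcal{H}/\dim\mathcal{X}$ is needed.
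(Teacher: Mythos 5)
Your proof is correct and takes essentially the same route as the paper: compute $\sum_i \|\Pi_{\mathcal{X}}\ket{w_i}\|^2 = \tr(\Pi_{\mathcal{X}}) = \dim\mathcal{X}$, select the $d$ basis vectors with smallest projection by averaging, and bound the operator norm of $\Pi_{\mathcal{W}}\Pi_{\mathcal{X}}$ by the sum $\sum_{w_i\in\mathcal{W}}\|\Pi_{\mathcal{X}}\ket{w_i}\|^2$. The only cosmetic difference is that you make the intermediate $\|\cdot\|\le\|\cdot\|_F$ step explicit, whereas the paper states the resulting inequality directly.
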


\begin{proof}
The average projection of a basis element on ${\mathcal{X}}$ is: 
\begin{equation*}
\overline{ \norm{\Pi_{\mathcal{X}} \ket{w_i}}^2}=\frac{\sum_{i=1}^{\dim \mathcal H} \bra {w_i} \Pi_{\mathcal{X}} \ket{w_i}}{\dim \mathcal H} = \frac{\tr(\Pi_{\mathcal{X}})}{\dim \mathcal H}= \frac{\dim {\mathcal{X}}}{\dim \mathcal{H}} 
\end{equation*}
Let $D\subseteq \{1,2,...,\dim \mathcal H\}$, $\abs{D}=d$ and $\mathcal{W}=\mathrm{span} \{w_j\}_{j\in D}$.
\begin{equation}
\begin{split}
\norm{\Pi_{\mathcal{W}} \Pi_{\mathcal{X}}}^2
&
= \norm{\sum_{j\in D} \Pi_{\mathcal{X}} \ketbra{w_i}{w_i}}^2 
\le
\sum_{j\in D} \norm{\Pi_{\mathcal{X}} \ket{w_i}}^2
\end{split}
\end{equation}
One can always find $d$ vectors whose sum of squared projection (RHS) is at most $d$ times the average,  and the proof follows.
\end{proof}


\end{document}